\newcolumntype{d}[1]{D{.}{.}{#1}}
\newtheorem{definition}{Definition}
\newtheorem{theorem}[definition]{Theorem}
\newtheorem{proposition}[definition]{Proposition}
\newtheorem{corollary}[definition]{Corollary}
\newtheorem{lemma}[definition]{Lemma}
\newtheorem{conjecture}{Conjecture}
\newtheorem*{mainlemma}{Parity Lemma}
\newcommand{\ka}{\color[rgb]{0,0,0}\ }
\let\epsilon=\varepsilon
\begin{document}

\title{\textbf{Small snarks with large oddness}}

\author{
Robert Lukoťka${}^1$,
Edita Máčajová${}^2$,
Ján Mazák${}^1$,
Martin Škoviera${}^2$
\\[3mm]
\\{\tt \{robert.lukotka, jan.mazak\}@truni.sk}
\\{\tt \{macajova, skoviera\}@dcs.fmph.uniba.sk}
\\[5mm]
${}^1$ Trnavská univerzita. Priemyselná 4, 918 43 Trnava\\
${}^2$ Univerzita Komenského, Mlynská dolina, 842 48 Bratislava
}

\maketitle

\begin{abstract}
We estimate the minimum number of vertices of a cubic graph
with given oddness and cyclic connectivity. We prove that a
bridgeless cubic graph $G$ with oddness $\omega(G)$ other than
the Petersen graph has at least $5.41\cdot\omega(G)$ vertices,
and for each integer $k$ with $2\le k\le 6$  we construct an
infinite family of cubic graphs with cyclic connectivity $k$
and small oddness ratio $|V(G)|/\omega(G)$. In particular, for
cyclic connectivity $2$, $4$, $5$, and $6$ we improve the upper
bounds on the oddness ratio of snarks to $7.5$, $13$, $25$, and
$99$ from the known values $9$, $15$, $76$, and $118$,
respectively. In addition, we construct a cyclically
$4$-connected snark of girth $5$ with oddness $4$ on $44$
vertices, improving the best previous value of $46$.
\end{abstract}

\section{Introduction}

Cubic graphs -- and more generally all graphs with maximum
degree $3$ -- naturally fall into two classes depending on
whether they do or do not admit a $3$-edge-colouring.
Accordingly, graphs from the former class are often called
\emph{colourable} while those in the latter class are called
\emph{uncolourable}. Although most cubic graphs are known to be
colourable \cite{robinson}, the problem of determining whether
a cubic graph has a $3$-edge-colouring is NP-complete
\cite{holyer}.

In the study of uncolourable graphs the following lemma plays a
fundamental role.
\begin{mainlemma}\label{paritylema}
Let $G$ be a cubic graph endowed with a proper
$3$-edge-colouring with colours $1$, $2$, and $3$. If a cutset
consisting of $m$ edges contains $m_i$ edges of colour $i$ for
$i\in\{1,2,3\}$, then
$$
m_1\equiv m_2\equiv m_3\equiv m\pmod 2.
$$
\end{mainlemma}
An immediate consequence of the Parity Lemma is that every
cubic graph with a bridge is uncolourable. Besides this trivial
family of uncolourable cubic graphs there exist more
interesting examples, called \emph{snarks}. These are
$2$-connected cubic uncolourable graphs, sometimes required to
satisfy additional conditions, such as cyclic
$4$-edge-connectivity and girth at least five, to avoid
triviality. Snarks are quintessential to many important
problems and conjectures in graph theory including the
$4$-colour theorem, Tutte's $5$-flow conjecture, the cycle
double cover conjecture, and many others. While most of these
problems are trivial for $3$-edge-colourable graphs, they are
exceedingly difficult for snarks in general. On the other hand,
for those which are close to being colourable they are usually
tractable. For example, the $5$-flow conjecture has been
verified for snarks with oddness at most $2$ (Jaeger
\cite{jaeger2}), and the cycle double cover conjecture has been
verified for snarks with oddness at most $4$ (Huck and Kochol
\cite{hk}, H\"aggkvist and McGuinness \cite{hg}). Snarks with
large oddness thus remain potential counterexamples to these
conjectures and therefore deserve further study.

Oddness is a natural measure of uncolourability of a cubic
graph based on the fact that every bridgeless cubic graph has a
$1$-factor \cite{petersen} and consequently also a $2$-factor.
It is easy to see that a cubic graph is $3$-edge-colourable if
and only if it has a $2$-factor that only consists of even
circuits. In other words, snarks are those cubic graphs which
have an odd circuit in every $2$-factor. The minimum number of
odd circuits in a $2$-factor of a bridgeless cubic graph $G$ is
its \emph{oddness}, and is denoted by $\omega(G)$. Since every
cubic graph has even number of vertices, its oddness must also
be even.

Another natural measure of uncolourability of a cubic graph is
based on minimising the use of the fourth colour in a
$4$-edge-colouring of a cubic graph. Alternatively, one can ask
how many edges have to be deleted in order to get a
$3$-edge-colourable graph. Somewhat surprisingly, the required
number of edges to be deleted is the same as the number of
vertices that have to be deleted in order to get a
$3$-edge-colourable graph (see \cite[Theorem~2.7]{steffen1}).
This quantity is called the \emph{resistance} of $G$, and will
be denoted by~$\rho(G)$. Observe that $\rho(G) \le \omega(G)$
for every bridgeless cubic graph $G$ since deleting one edge
from each odd circuit in a $2$-factor leaves a colourable
graph. On the other hand, the Parity Lemma implies that
$\rho(G)$ never equals $1$, which in turn yields that
$\rho(G)=2$ if and only if $\omega(G)=2$. The difference
between $\omega(G)$ and $\rho(G)$ can be arbitrarily large in
general \cite{steffen2}, nevertheless, resistance can serve as
a convenient lower bound for oddness, because it is somewhat
easier to handle.

The purpose of this article is to study how large the oddness
of a cubic graph can be compared to its order. The results of
this comparison strongly depend on the cyclic connectivity of
the graphs in question. We are therefore interested in the
smallest possible value of the ratio $|V(G)|/\omega(G)$ for a
snark $G$ within the class of cyclically $k$-connected snarks.
Recall that a cubic graph $G$ is \emph{cyclically
$k$-connected} if no set of fewer that $k$ edges can separate
two circuits of $G$ into distinct components; the \emph{cyclic
connectivity} $\zeta(G)$ of $G$ is the largest $k$ such that
$G$ is cyclically $k$-connected. It is useful to realise that
the values of vertex-connectivity, edge-connectivity, and
cyclic connectivity of a cubic graph coincide for $k\le 3$ but
cyclic connectivity may be arbitrarily large (see \cite{NScc}).

So far, only trivial lower bounds for the oddness ratio
$|V(G)|/\omega(G)$ have been known; as regards the upper
bounds, there are various constructions of snarks which are
probably not optimal. Since the oddness ratio of the Petersen
graph equals $5$, it is meaningless to attempt improving this
absolute lower bound. In Section~\ref{sec:asympt} we therefore
adopt an asymptotical approach similar to that taken by Steffen
\cite{steffen2} and by H\"agglung \cite{hagglund}. We summarise
the previously known results as well as our improvements in
Table~\ref{table}. We only consider cyclic connectivity $k
\in\{2,3,4,5,6\}$ because no cyclically $7$-connected snarks
are known. In fact, Jaeger \cite{jaeger7cc} conjectured that
such snarks do not exist. We prove, in particular, that the
oddness ratio of every snark is bounded above by $7.5$. We
conjecture that this general bound is best possible as we
believe that all snarks with oddness $\omega$ have at least
$7.5 \omega -5$ vertices (see Section~\ref{sec:2} for details).

\begin{table}[h]
\label{table} \caption{Upper and lower bounds on oddnes ratio
$|V|/\omega$.}
\begin{center}
\begin{tabular}{|c||d{1.2}|d{2.1}|rl|}
\hline
connectivity $k$
& \multicolumn{1}{c}{lower bound}
&\multicolumn{1}{|c}{current upper bound}
& \multicolumn{2}{|c|}{previous upper bound}\\
\hline\hline
2 & 5.41 &  7.5& 9 & (Steffen \cite{steffen2})     \\\hline
3 & 5.52 &    9& 9 & (Steffen \cite{steffen2})     \\\hline
4 & 5.52 & 13&  15 & (H\"agglund \cite{hagglund})  \\\hline
5 & 5.83 &   25&76 & (Steffen \cite{steffen2})      \\\hline
6 & 7    &  99&118 & (Kochol \cite{kocholOddness}) \\\hline
\end{tabular}
\end{center}
\end{table}

Besides general bounds, we are also interested in identifying
the smallest snark with oddness $4$, addressing a long-standing
open problem restated as Problem 4 in \cite{brinkmann}. We show
that the smallest order of a snark with oddness $4$ is $28$ and
present one with cyclic connectivity $3$ (see
Figure~\ref{fig:snark28} right). We further construct a
cyclically $4$-connected snark of girth $5$ with oddness $4$
and resistance $3$ on $44$ vertices (see Figure~\ref{snark44}),
improving by $2$ the value established in \cite{hagglund}. We
believe that $44$ is the smallest possible order of a
non-trivial snark of oddness~$4$.

\section{Oddness and resistance ratios}\label{sec:asympt}

The \emph{oddness ratio} of a snark $G$ is the quantity
$|V(G)|/\omega(G)$, and its \emph{resistance ratio} is the
quantity $|V(G)|/\rho(G)$. In order to derive some relevant
information about these parameters we also examine their
asymptotic behaviour. To this end, we define
$$
A_\omega = \liminf_{|V(G)|\to\infty} {|V(G)|\over \omega(G)}
$$
and
$$
A_\rho = \liminf_{|V(G)|\to\infty} {|V(G)|\over \rho(G)}.
$$
Since the oddness ratio is at least as large as its resistance
ratio, we have $A_\omega\le A_\rho$.

The oddness and resistance ratios heavily depend on the cyclic
connectivity of a graph in question. This suggests to study
analogous values $A_\omega^k$ and~$A_\rho^k$ obtained under the
assumption that the class of snarks is restricted to those with
cyclic connectivity at least~$k$.  Note that $A_\omega^k\le
A_\rho^k$ for every $k\ge 2$ and that $A_\omega^2=A_\omega$ and
$A_\rho^2=A_\rho$.

Similar ideas were pursued by Steffen \cite{steffen2} who asked
the following question: For which values $a_\rho$ the
inequality $n\ge a_\rho\rho(G)$ has only finitely many
counterexamples? He proved (Theorem 2.4 and Lemma 3.4 in
\cite{steffen2}) that the resistance ratio of every snark of
order $n>14$ is at least $8$ and constructed a family of snarks
for which $\rho(G)\ge n/9$. It follows that $8\le A_\rho\le 9$
and therefore $A_\omega\le A_\rho\le 9$. Since the snarks that
he constructed are cyclically $3$-connected, we also have
$A_\omega=A_\omega^2\le A_\omega^3\le A_\rho^3\le 9$. An
earlier construction of Rosenfeld \cite{rosenfeld} provided the
bounds $A_\omega\le A_\rho\le 10$.

In this paper we concentrate on upper and lower bounds for
$A_\omega$ and $A_\omega^k$, while their resistance
counterparts $A_\rho$ and $A_\rho^k$ will only play an
auxiliary role. To start with, let us discuss the value
$A_\omega^6$. Since every odd circuit in a cyclically
$6$-connected snark has length at least $7$, every such snark
of oddness $\omega$ has at least $7\omega$ vertices. Thus
$A_\omega^6\ge 7$, which fills in the entry in the second
column of the last line of Table~\ref{table}.

\section{Reduction lemmas}\label{sec:reduction}

In the study of various properties of snarks it is often
convenient to avoid snarks having short circuits or small
edge-cuts, because such snarks can either be considered trivial
or lack the desired properties. There are well-known reductions
that remove these structures from snarks such as contraction of
a 3-circuit or a 2-circuit and suppression of a $2$-valent
vertex  whenever it arises \cite{watkins2}. Similar reductions
will also be needed in our further investigation.

We begin with the observation that in the search for small
snarks with large oddness we can ignore graphs with parallel
edges or triangles.

\begin{lemma}\label{lemma:girth4}
For every snark $G$ there exists a snark $G'$ of order not
exceeding that of $G$ such that $\omega(G')=\omega(G)$,
$\zeta(G')\ge\zeta(G)$, and the girth of $G'$ is at least $4$.
\end{lemma}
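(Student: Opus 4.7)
The plan is induction on $|V(G)|$. The base case holds when $G$ already has girth at least $4$: take $G' = G$. In the inductive step, $G$ contains either a digon (a pair of parallel edges on two vertices $u$ and $v$) with third neighbours $a$ and $b$, necessarily distinct since $G$ is bridgeless, or a triangle $uvw$ with third neighbours $a, b, c$. The corresponding reductions produce $G_1$ from $G$ by either deleting $u, v$ together with their four incident edges and inserting a new edge $ab$, or by contracting the triangle to a single vertex $x$ adjacent to $a, b, c$. In either case $|V(G_1)| = |V(G)| - 2$, so it suffices to verify that $G_1$ is a snark with $\omega(G_1) = \omega(G)$ and $\zeta(G_1) \geq \zeta(G)$; induction applied to $G_1$ then yields the required $G'$.

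That $G_1$ is a snark rests on two observations. First, $G_1$ is bridgeless: any bridge in $G_1$, once the digon or triangle is substituted back, would give a bridge or a cut vertex in $G$, contradicting the fact that snarks are $2$-vertex-connected. Second, $G_1$ is uncolourable because proper $3$-edge-colourings of $G$ and of $G_1$ are in bijection via the Parity Lemma applied to the cut $\{ua, vb\}$ or $\{ua, vb, wc\}$: the former forces its two edges to share a colour (becoming the colour of $ab$ in $G_1$); the latter forces its three edges to receive pairwise distinct colours (becoming the colours of $xa, xb, xc$). For the cyclic connectivity claim, any cyclic edge-cut $S'$ in $G_1$ converts to a cyclic edge-cut in $G$ of the same size, obtained by substituting each occurrence of $ab$ or of an edge among $\{xa, xb, xc\}$ in $S'$ by the corresponding pendant edge; the digon or triangle then sits on one side of the cut, supplying a cycle there, while the other side inherits a cycle from $G_1 - S'$.

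The oddness equality is the crux. The inequality $\omega(G) \leq \omega(G_1)$ is routine: any $2$-factor of $G_1$ lifts to a $2$-factor of $G$ by re-expanding $ab$ into the path $a, u, v, b$ (through one parallel edge) or $x$ into the path $a, u, w, v, b$ (through two triangle edges), which lengthens each affected cycle by two and preserves its parity. For the reverse inequality, start with an optimal $2$-factor $F$ of $G$. In the digon case, and in the triangle case where the triangle has $p=2$ pendant edges in $F$, the contraction produces a $2$-factor of $G_1$ with the same number of odd cycles. The delicate case is a triangle $T$ that is itself an odd component of $F$ (i.e., $p=0$). Here we produce a modified $2$-factor $F^{*} = F \oplus C$, where $C$ is an $F$-alternating cycle in $F \cup (E(G) \setminus F)$ built from one triangle edge $uv$, the two pendants $ua, vb$, and an $F$-alternating path from $b$ to $a$ inside $G - T$. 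Such a path exists by a standard augmenting-path argument: the matching $(E(G) \setminus F) \cap E(G - T)$ leaves $a, b, c$ unsaturated while a maximum matching in $G - T$ leaves at most one vertex unsaturated, so an augmenting path joins two of $\{a, b, c\}$, possibly after relabelling the triangle vertices. Then $F^{*}$ has $p = 2$ and a parity check in the two subcases ``$a, b$ on the same cycle of $F - T$'' and ``$a, b$ on distinct cycles of $F - T$'' shows that the triangle's odd cycle is exchanged for at most one new odd cycle, so $F^{*}$ has no more odd cycles than $F$; contracting $F^{*}$ yields the required $2$-factor of $G_1$.

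The main obstacle is the construction and parity analysis of the alternating cycle $C$ in the $p=0$ triangle case; the rest is standard bookkeeping, and iteration of the reduction produces the desired $G'$ in finitely many steps.
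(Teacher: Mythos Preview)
Your overall strategy is sound, and most of the argument is correct: the induction, the reduction of a digon or triangle, the verification that $G_1$ is a snark with $\zeta(G_1)\ge\zeta(G)$, and the lifting of $2$-factors to establish $\omega(G)\le\omega(G_1)$. Your approach also differs from the paper's in an interesting way: you attempt a self-contained proof of $\omega(G_1)\le\omega(G)$ via an explicit augmenting-path modification, whereas the paper simply invokes the Kaiser--\v{S}krekovski theorem \cite{malerezy} that every bridgeless cubic graph has a $2$-factor in which no triangle occurs as a component.

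There is, however, a genuine gap in your treatment of the $p=0$ triangle case. You correctly argue that the matching $M'=(E(G)\setminus F)\cap E(G-T)$ leaves $a,b,c$ unsaturated and that $G-T$ has a near-perfect matching, so an $M'$-augmenting path between two of $a,b,c$ exists and yields the alternating cycle $C$. What is not justified is the key inequality $\mathrm{odd}(F^{*})\le\mathrm{odd}(F)$. Your two subcases only track whether $a$ and $b$ lie on the same cycle of $F-T$; they say nothing about the \emph{other} $F$-cycles that the augmenting path $P$ may traverse. When $P$ enters and leaves an intermediate even cycle $D$ (possibly several times), the flip $F\mapsto F\oplus C$ cuts $D$ into arcs and reassembles them together with arcs of $D_a$, $D_b$ and the path $u\!-\!w\!-\!v$. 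There is no a~priori reason this reassembly cannot turn the single odd cycle $T$ into three odd cycles in $F^{*}$; a parity count gives only $\mathrm{odd}(F^{*})\equiv\mathrm{odd}(F)\pmod 2$, not the inequality. (Recall that flipping an alternating cycle \emph{can} increase the number of odd components: in the triangular prism, flipping the alternating $6$-cycle turns the Hamiltonian $2$-factor into two triangles.) To salvage the argument you would need either a careful structural analysis of how $C$ threads through all the $F$-cycles it meets, or a controlled choice of $P$ with a proof that the chosen path avoids the bad behaviour---and that is essentially the nontrivial content supplied by the result the paper cites.
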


\begin{proof}
Let $G_0$ be the snark obtained from $G$ by the standard
reduction of a single circuit $C$ of length $2$ or $3$: the
graph $G_0$ arises from $G$ simply by the contraction of $C$
into a single vertex and by suppressing a vertex of degree two,
if it arises. It is immediate that $\zeta(G_0)\ge \zeta(G)$.

For each $2$-factor $F_0$ of $G_0$ there exists a corresponding
$2$-factor $F$ in $G$ such that every circuit of $F_0$ extends
to a unique circuit of $F$ with the same parity of the number
of vertices and at least the same length. Since $F$ contains no
circuits other than those corresponding to the circuits of
$F_0$, we have $\omega(G_0)\ge \omega(G)$.  On the other hand,
$G$ always contains a $2$-factor $F$ that contains no triangles
\cite{malerezy}. This $2$-factor has a corresponding $2$-factor
$F_0$ in $G_0$ with the same number of odd circuits, and thus
$\omega(G_0)\le \omega(G)$.

The required graph $G'$ is now obtained by repeating a similar
procedure with any circuit of length $2$ or $3$ in $G_0$.
\end{proof}

The standard reduction of a $4$-circuit, which consists in
removing a pair of opposite edges and suppressing the resulting
vertices of degree $2$, does not work here because it may
decrease cyclic connectivity, produce a bridge, or even create
a disconnected graph. Nevertheless, for our purpose the
following weaker result is sufficient.

\begin{lemma}\label{lemma:girth5}
For every snark $G$ there exists a snark $G'$ of order not
exceeding that of $G$ such that $\omega(G')=\omega(G)$ and the
girth of $G'$ is at least $5$.
\end{lemma}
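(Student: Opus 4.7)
The plan is by induction on $|V(G)|$, combined with Lemma~\ref{lemma:girth4}. First apply Lemma~\ref{lemma:girth4} so we may assume $G$ has girth at least~$4$. If the girth is already at least $5$, set $G'=G$; otherwise fix a $4$-cycle $C=v_1v_2v_3v_4$ in $G$ and let $u_i$ be the unique neighbour of $v_i$ outside $C$. The goal is to produce a snark $G_1$ with $|V(G_1)|\le|V(G)|-4$ and $\omega(G_1)=\omega(G)$; recursion then completes the proof.

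To reduce $C$, we consider its two standard ``opposite-edge'' reductions: $G_A$ is obtained from $G$ by removing the edges $v_1v_2$ and $v_3v_4$ and suppressing the four resulting $2$-valent vertices, which introduces new edges $u_1u_4$ and $u_2u_3$; the graph $G_B$ is defined analogously after removing $v_2v_3$ and $v_4v_1$, yielding edges $u_1u_2$ and $u_3u_4$. Both $G_A$ and $G_B$ are cubic (multi)graphs on $|V(G)|-4$ vertices. Two observations are straightforward. First, any proper $3$-edge-colouring of $G_A$ or $G_B$ extends to a proper $3$-edge-colouring of $G$ (the colour of each new edge forces the colours of the two corresponding pendant edges of $C$, and the remaining $C$-edges can always be coloured consistently), so both reductions inherit uncolourability from $G$. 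Second, a case analysis on the seven admissible $C$-patterns of a $2$-factor shows that the $2$-factors of $G_A$ are in a parity-preserving bijection with those $2$-factors of $G$ whose pattern on $C$ belongs to a specific $4$-element family $\mathcal{P}_A$, and analogously for $G_B$ with $\mathcal{P}_B$. Since $\mathcal{P}_A\cup\mathcal{P}_B$ covers all seven admissible patterns, we obtain $\omega(G)=\min(\omega(G_A),\omega(G_B))$, so at least one of the reductions attains $\omega(G)$.

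The main obstacle is to guarantee that a reduction achieving the right oddness is simultaneously $2$-edge-connected, hence a snark. When $G-V(C)$ is connected, both $G_A$ and $G_B$ are readily checked to be bridgeless. Otherwise $G-V(C)$ splits into two components $H_1,H_2$, and at most one of $G_A,G_B$ stays connected. Here a parity argument---each $H_i$ must be crossed by an even number of edges of any $2$-factor of $G$---restricts which $C$-patterns are admissible, and a local swap between symmetric patterns (for instance, exchanging $(1,0,1,0)$ for $(0,1,0,1)$) combined with the snark hypothesis (which rules out even $2$-factors) forces the minimum to be attained on a pattern compatible with the connected reduction. Any multi-edges or short cycles created in the process are eliminated by a further application of Lemma~\ref{lemma:girth4}, and we iterate until the girth reaches $5$.
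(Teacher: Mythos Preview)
Your approach parallels the paper's---both reduce a $4$-cycle via one of the two standard opposite-edge contractions---but you go further: the seven-pattern classification and the identity $\omega(G)=\min(\omega(G_A),\omega(G_B))$ are correct and make explicit something the paper glosses over (the paper only argues that every $2$-factor of the chosen reduction lifts to $G$ with parity preserved, which gives $\omega(G')\ge\omega(G)$, and then simply asserts equality).

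There is, however, a genuine gap in your treatment of the disconnected case. Suppose $G-V(C)$ splits into $H_1\ni\{u_1,u_4\}$ and $H_2\ni\{u_2,u_3\}$, so that $G_A$ is disconnected and $G_B$ is the $2$-connected reduction you must use. The parity constraint on $H_1,H_2$ rules out two of the seven patterns, but \emph{five} remain admissible, and three of them---the opposite pair $\{v_1v_4,v_2v_3\}$ and the two ``three consecutive'' patterns with external pairs $(v_1,v_4)$ and $(v_2,v_3)$---lie in $\mathcal P_A\setminus\mathcal P_B$. Your local swap handles the first of these: exchanging the two opposite-pair patterns merges two circuits into one, and optimality of $F$ forbids a drop in the odd count, so it cannot rise either. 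But for a ``three consecutive'' pattern, say external $(v_1,v_4)$, there is \emph{no} local swap available. Every pattern in $\mathcal P_B$ uses a different subset of the external edges $v_iu_i$, so passing to $\mathcal P_B$ forces you to alter the $2$-factor inside $H_1$ or $H_2$; you give no reason why such a non-local modification can be made without increasing the number of odd circuits, and the snark hypothesis does not supply one. (A smaller point: the claim that both reductions are bridgeless whenever $G-V(C)$ is connected is also unjustified---an edge $e$ of $G-V(C)$ whose removal separates $\{u_1,u_4\}$ from $\{u_2,u_3\}$ is a bridge of $G_A$---though that particular case is harmless, since a cubic graph with a bridge has no $2$-factor and hence the other reduction automatically attains the minimum.)
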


\begin{proof}
For every $4$-circuit in $G$ there are two possibilities for
applying the standard reduction which correspond to two pairs
of opposite edges in a quadrilateral. It is not difficult to
see that one of the choices always creates a $2$-connected
graph \cite{watkins2}. Furthermore, any $2$-factor of the
reduced graph can be easily extended to a $2$-factor of the
original graph without changing the parity of the lengths of
the circuits and by adding at most one $4$-circuit.
Consequently, after reducing a $4$-cycle in a snark we obtain a
smaller snark with the same oddness. We repeatedly apply
reductions of circuits of length $4$ and, if necessary, we also
reduce circuits of length $2$ and $3$ in the manner described
in the previous proof.  Eventually we obtain a snark $G'$ which
has the same oddness as $G$, girth at least $5$, and is
$2$-connected.
\end{proof}

The remaining two lemmas deal with small edge-cuts in snarks.

\begin{lemma}\label{lemma:2-edge-cut-uncolourable}
For every snark $G$ there exists a snark $G'$ of order not
exceeding that of $G$ such that $\omega(G')=\omega(G)$ and
every $2$-edge-cut in $G'$ separates two uncolourable subgraphs
of $G'$.
\end{lemma}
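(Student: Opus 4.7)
The plan is an iterative reduction. If every $2$-edge-cut of $G$ already separates two uncolourable subgraphs, set $G':=G$. Otherwise, pick a $2$-edge-cut $\{e_1,e_2\}$ with $e_i=u_iv_i$, and let $H_1,H_2$ be the two components of $G-\{e_1,e_2\}$ with $u_i\in H_1$ and $v_i\in H_2$. Form the cubic graph $H_2^*:=H_2+v_1v_2$ and suppose that $H_2^*$ is $3$-edge-colourable; analogously set $H_1^*:=H_1+u_1u_2$. Replace $G$ by $H_1^*$ and iterate; since $|V(H_2)|\ge 2$, the order strictly drops at each step, so the process terminates. That each intermediate $H_1^*$ is a snark is routine: it is cubic, $2$-connectedness is inherited from $G$ by a short case analysis on potential cut-vertices, and uncolourability follows from the standard gluing argument---any $3$-edge-colouring of $H_1^*$, after a suitable permutation of colours, could be combined with one of $H_2^*$ to give a $3$-edge-colouring of $G$, contradicting the hypothesis.

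The crux is the equality $\omega(H_1^*)=\omega(G)$. For the inequality $\omega(G)\le\omega(H_1^*)$, any $2$-factor $F^*$ of $H_1^*$ can be extended to a $2$-factor $F$ of $G$ by appending one of the three $2$-factors of $H_2^*$ obtained as pairwise unions of its colour classes; each of these consists only of even circuits. Among the three, one can always be chosen to agree with $F^*$ on whether the edge $u_1u_2$ (respectively $v_1v_2$) is used, so the odd circuits of $F$ correspond exactly to those of $F^*$. For the reverse inequality, take an optimal $2$-factor $F$ of $G$ and let $F^*$ be the restriction of $F$ to $E(H_1)$, augmented by the edge $u_1u_2$ precisely when $e_1,e_2\in F$. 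A short case analysis shows that the only delicate case is when $e_1,e_2\in F$, the cycle of $F^*$ through $u_1u_2$ is odd, and the circuit of $F$ through $e_1,e_2$ is even---equivalently, when the $2$-factor of $H_2^*$ induced from $F$ by adjoining $v_1v_2$ contains an odd cycle through $v_1v_2$.

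The delicate case is settled by the parity fact that every $2$-factor of a cubic graph of even order contains an even number of odd circuits. Hence the induced $2$-factor of $H_2^*$ must contain a second odd cycle, disjoint from $v_1v_2$; that cycle is an odd circuit of $F$ lying entirely inside $H_2$, and it exactly offsets the extra odd circuit of $F^*$ through $u_1u_2$. This parity argument is the main obstacle of the proof: without it, the reduction could strictly increase the oddness. Once it is in hand, the lemma follows by iterating the construction to termination.
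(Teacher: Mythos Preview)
Your argument is correct and follows essentially the same reduction as the paper: discard the colourable side of a bad $2$-edge-cut, use the even $2$-factors coming from a $3$-edge-colouring of $H_2^*$ to extend any $2$-factor of $H_1^*$ to one of $G$ with the same odd circuits, and iterate. Your treatment is in fact more complete than the paper's: the paper only spells out the direction $\omega(G)\le\omega(H_1^*)$, while you also justify $\omega(H_1^*)\le\omega(G)$ via the parity observation that the induced $2$-factor of $H_2^*$ must contain a second odd circuit in the ``delicate case.'' That is precisely the point the paper's terse phrase ``since $G_2$ has even number of vertices'' is gesturing at.
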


\begin{proof}
Let $G$ be a snark with a $2$-edge-cut $S$ that separates $G$
into two components one of which is colourable. If both
components were colourable, then so would be $G$, by the Parity
Lemma. It follows that one of the components  is colourable and
the other is not. Let $G_1$ be the uncolourable component, and
let $G_2$ be the colourable one. For $i\in \{ 1, 2 \}$, let
$G_i'$ arise from $G_i$ by joining its vertices of degree two
by an edge $e_i$. By the Parity Lemma, $G_2'$ is colourable, so
there is an even $2$-factor in $G_2'$ passing through~$e_2$.
Consequently, every $2$-factor $F_1$ of $G_1'$ can be extended
to a $2$-factor $F$ of $G$ such that no odd circuit of $F$ is
contained $G_2$. Since $G_2$ has even number of vertices, the
oddness of $G$ does not exceed the that of $G_1'$. By repeating
the procedure with $G_1'$ we eventually obtain the desired
graph $G'$.
\end{proof}

Similar arguments can be used to prove an analogous result for
$3$-edge-cuts.

\begin{lemma}\label{lemma:3-edge-cut-uncolourable}
For every snark $G$ there exists a snark $G'$ of order not
exceeding that of $G$ such that $\omega(G')=\omega(G)$ and
every $3$-edge-cut in $G'$ separates two uncolourable subgraphs
of $G'$.
\end{lemma}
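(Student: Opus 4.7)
The plan is to follow the template of Lemma~\ref{lemma:2-edge-cut-uncolourable}, now replacing the colourable side of a $3$-edge-cut by a single vertex. Assume $G$ contains a non-trivial $3$-edge-cut $S=\{e_1,e_2,e_3\}$ whose removal yields components $G_1$ and $G_2$, at least one of which is colourable; otherwise we are done. For $i\in\{1,2\}$ I form $G_i'$ by attaching a fresh vertex $v_i$ to $G_i$ and joining it to the three endpoints of $S$ lying in $G_i$, so that $G_i'$ becomes cubic. By the Parity Lemma the three cut edges receive three distinct colours in any $3$-edge-colouring of $G$; therefore, if both $G_1'$ and $G_2'$ were colourable, then a suitable permutation of colours on one side would produce a $3$-edge-colouring of $G$, contradicting $G$ being a snark. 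Hence exactly one of them, say $G_1'$, is uncolourable. The graph $G_1'$ is $2$-edge-connected (a bridge or $1$-edge-cut of $G_1'$ would lift to a bridge or $2$-edge-cut of $G$) and satisfies $|V(G_1')|<|V(G)|$ because non-triviality of the cut forces $|V(G_2)|\ge 3$.

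The crux is the equality $\omega(G_1')=\omega(G)$, and it is proved using colourability of $G_2'$ as follows. Fix a $3$-edge-colouring of $G_2'$; by the Parity Lemma the three edges at $v_2$ get three distinct colours, and for each pair of those edges the union of the corresponding two colour classes is an all-even $2$-factor of $G_2'$ passing through precisely that pair at $v_2$. Removing $v_2$ from the $v_2$-circuit of such a $2$-factor leaves an even-length path between the two matching $y$-vertices. Given any $2$-factor $F_1'$ of $G_1'$, splicing it along $S$ with the even $2$-factor of $G_2'$ that matches the pair of edges used by $F_1'$ at $v_1$ yields a $2$-factor $F$ of $G$: its $G_2$-side contributes no odd circuits, and its unique cut-crossing circuit has the same parity as the $v_1$-circuit of $F_1'$, because the spliced $G_2$-segment has even length. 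Hence $\omega(F)=\omega(F_1')$, which proves $\omega(G)\le\omega(G_1')$. For the reverse inequality, take an optimal $2$-factor $F$ of $G$; a parity count based on the fact that $|V(G_1)|$ is odd shows $|F\cap S|\in\{0,2\}$. If $|F\cap S|=2$, the direct restriction to $G_1'$ (adding the two $v_1$-edges corresponding to the cut edges of $F$) yields a $2$-factor whose odd-circuit count is bounded by $\omega(F)$, with the bookkeeping for the cut-crossing circuit controlled by the fact that $|V(G_2)|$ is odd. If $|F\cap S|=0$, one first modifies $F$ into a $2$-factor $F^*$ with $|F^*\cap S|=2$ and $\omega(F^*)\le\omega(F)$ by swapping its $G_2$-portion against a suitable even $2$-factor of $G_2'$ and correcting the $G_1$-part along an $F$-alternating path between two of the $x$-vertices, after which the previous case applies.

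Iterating until no $3$-edge-cut separates a colourable component from an uncolourable one produces the required snark $G'$, and the process terminates because $|V|$ strictly decreases at each step. The main obstacle I foresee is the $|F\cap S|=0$ subcase of the reverse oddness inequality: to realise the $F\mapsto F^*$ modification one must ensure that two of $x_1,x_2,x_3$ lie on a common circuit of $F\cap G_1$, and handle the case where all three lie on distinct circuits by adaptively choosing the pair of edges at $v_2$ used in the even $G_2'$ $2$-factor. This is exactly where the freedom provided by the Parity Lemma on the colourable side $G_2'$ is essential.
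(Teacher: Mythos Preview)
The paper's own proof is the single line ``Similar arguments can be used to prove an analogous result for $3$-edge-cuts,'' so you are supplying details the authors left out. Your inequality $\omega(G)\le\omega(G_1')$ is correct and is the right analogue of the $2$-edge-cut argument: a $3$-edge-colouring of $G_2'$ furnishes, for each pair of edges at $v_2$, an even $2$-factor of $G_2'$ through that pair, and splicing it onto an arbitrary $2$-factor of $G_1'$ preserves the parity of the crossing circuit while adding only even circuits on the $G_2$-side. Your handling of the $|F\cap S|=2$ subcase of the reverse inequality, via the parity bookkeeping using $|V(G_2)|$ odd, is also fine.

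The gap you yourself flag in the $|F\cap S|=0$ subcase is genuine, and you are right that this is exactly where the ``similar arguments'' cease to be similar. For a $2$-edge-cut, when $F$ avoids $S$ the restriction $F|_{G_1}$ is automatically a $2$-factor of $G_1'=G_1+e_1$, so $\omega(G_1')\le\omega(F)$ is immediate; for a $3$-edge-cut, $G_1'=G_1+v_1$ and $F|_{G_1}$ leaves the new vertex $v_1$ uncovered, so there is real work to do. Your proposed repair --- replace the $G_2$-part by an even $2$-factor of $G_2'$ and flip an $F$-alternating $x_i$--$x_j$ path on the $G_1$-side --- is not justified as it stands. First, you have not shown that a \emph{simple} alternating path from some $x_i$ to some $x_j$, starting and ending with $F$-edges, must exist. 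Second, and more seriously, even when such a path exists, taking the symmetric difference with it can merge and split the circuits of $F|_{G_1}$ in an uncontrolled way, so you have no bound on the number of odd circuits of the resulting $2$-factor of $G_1'$; the ``adaptive choice of pair'' addresses neither point. What would suffice is to exhibit a $2$-factor of $G_1'$ with at most $k_1+1$ odd circuits (enough because the $G_2$-side contributes $k_2\ge 1$ odd circuits to $\omega(F)$), but your sketch does not deliver this. As written, you have only established $\omega(G)\le\omega(G')$, not the equality the lemma asserts --- and, to be fair, the paper's one-line proof glosses over precisely the same point.
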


\section{Lower bounds on oddness ratio}\label{sec:lowerbd}

Lemma~\ref{lemma:girth4} implies that the oddness ratio of
every snark is at least $5$. This bound is best possible,
because the Petersen graph has ten vertices and oddness~$2$.
However, the Petersen graph is the only snark for which
equality holds, as it is the only $2$-edge-connected cubic
graph having only $5$-circuits in each $2$-factor~\cite{devos}.
The purpose of this section is to improve this bound for snarks
different from the Petersen graph.

We begin with the key observation that in a cubic graph one can
always find a $2$-factor that avoids most of the chosen
$5$-circuits.

\begin{proposition}\label{prop:basic}
Let $\mathcal{C}$ be a set of $5$-circuits of a bridgeless
cubic graph $G$. Then $G$ has a $2$-factor that contains at
most $1/6$ of $5$-circuits from $\mathcal{C}$.
\end{proposition}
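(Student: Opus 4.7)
My plan is to prove the proposition via an averaging argument over a $2$-factor drawn at random from a distribution supplied by the fractional perfect matching polytope. First, I would observe that in any bridgeless cubic graph $G$ the uniform vector $x_e = 1/3$ lies in Edmonds' perfect matching polytope: the degree constraints $\sum_{e\ni v}x_e=1$ follow from cubicity, while the odd-cut constraints $\sum_{e\in\partial(S)}x_e\ge 1$ for $|S|$ odd reduce to $|\partial(S)|\ge 3$, which holds because $|\partial(S)|\equiv|S|\pmod 2$ in any cubic graph and bridgelessness forbids $|\partial(S)|=1$. Writing $x$ as a convex combination of characteristic vectors of perfect matchings therefore gives a probability distribution on perfect matchings $M$ of $G$ (equivalently, on $2$-factors $F=E(G)\setminus M$) such that $\Pr[e\in M]=1/3$ for every edge~$e$.

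Next, I would fix a $5$-circuit $C\in\mathcal{C}$, set $S=V(C)$, and consider $X=|M\cap\partial(S)|$ and $k=|\partial(S)|$. A short check shows that $C$ is a component of $F$ if and only if $M$ contains every edge of $\partial(S)$, i.e.\ $X=k$: forcing $M\cap E(C)=\emptyset$ leaves each vertex of $C$ with no choice but to be matched by its non-$C$ incident edge, and conversely once all edges of $\partial(S)$ lie in $M$, the matching property forces any chord of $C$ into $M$ as well. The degree identity $X+2|M\cap E(G[S])|=|S|=5$ forces $X$ odd, and bridgelessness restricts $k\in\{3,5\}$ (a $5$-cycle carries at most one chord in a bridgeless cubic graph), while linearity gives $E[X]=k/3$. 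The pointwise inequality $X-1\ge(k-1)\,\mathbf{1}[X=k]$ on the support $\{1,3,\ldots,k\}$ then yields $\Pr[X=k]\le(E[X]-1)/(k-1)$, which evaluates to $1/6$ for $k=5$ and to $0$ for $k=3$. In either case $\Pr[C\subseteq F]\le 1/6$.

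Finally, linearity of expectation gives $E\bigl[|\{C\in\mathcal{C}:C\subseteq F\}|\bigr]\le|\mathcal{C}|/6$, and the probabilistic method supplies a $2$-factor realising the required bound. The only non-routine ingredient is the membership of $(1/3,\ldots,1/3)$ in the perfect matching polytope, which is where bridgelessness is consumed; after that, the entire argument is driven by the gap between the parity-forced minimum value $1$ of $X$ and its expectation $k/3$, which leaves at most a fraction $1/6$ of the mass at the top value $X=k$. I expect the main conceptual obstacle to be recognising that $1/6$ is the right constant and isolating it as the tight slack in this single-parameter chain of inequalities; the Petersen graph, where every $5$-cycle lies in exactly one of its six $2$-factors, shows the bound is sharp.
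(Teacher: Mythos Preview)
Your argument is correct and is essentially the same as the paper's, recast in probabilistic rather than linear-programming language. The paper minimises the linear functional $f(\mathbf{x})=\sum_{C\in\mathcal{C}}\mathbf{x}(\delta(C))$ over the perfect matching polytope, uses the test point $(1/3,\ldots,1/3)$ to bound the optimum by $\tfrac{5}{3}|\mathcal{C}|$, and then observes that a matching contributes at least $1$ to each summand and exactly $5$ when $C$ lies in the complementary $2$-factor; you instead interpret $(1/3,\ldots,1/3)$ as a convex combination of matchings, bound $\Pr[C\subseteq F]$ circuit by circuit via $E[X]=k/3$ and the parity floor $X\ge 1$, and finish with linearity of expectation. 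The two arguments unwind to the same inequality $4k\le \tfrac{2}{3}|\mathcal{C}|$; one minor advantage of your formulation is that it explicitly covers the chorded case $|\delta(C)|=3$ (yielding probability $0$), which the paper's write-up tacitly assumes away.
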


In the proof of this lemma we employ the concept of a perfect
matching polytope introduced in \cite{edmonds}. Let $G$ be a
graph with $E(G)=\{e_1,e_2,\dots,e_m\}$. Each perfect matching
$M$ of $G$ can be represented by its characteristic vector
$\mathbf{x}\in\mathbb{R}^m$ in which the $i$-th entry is $1$ if
$e_i$ belongs to $M$ and is $0$ otherwise. The \emph{perfect
matching polytope} $\mathcal{P}(G)$ of a graph $G$ is the
convex hull in $\mathbb{R}^m$ of the set of characteristic
vectors of all perfect matchings of $G$.

Let us denote the entry of a vector $\mathbf{x}\in\mathbb{R}^m$
corresponding to an edge $e\in E(G)$ by $\mathbf{x}(e)$, and
let
$$\mathbf{x}(S)=\sum_{e\in S}\mathbf{x}(e)$$
whenever $S$ is a subset of $E(G)$. For a set of vertices or a
subgraph $U$ of $G$ let $\delta(U)$ denote the set of all edges
with precisely one end in $U$. With this notation we can
equivalently describe $\mathcal{P}(G)$ as the set of all
vectors from $\mathbb{R}^m$ that satisfy the following
inequalities:
\begin{eqnarray*}
\mathbf{x}(e) &\ge& 0 \qquad \text{for each } e\in E(G),\\
\mathbf{x}(\delta(v)) &=& 1 \qquad \text{for each } v\in V(G),\\
\mathbf{x}(\delta(U)) &\ge& 1 \qquad \text{for each } U
\subseteq V(G)\ \hbox{with}\ |U|\ \hbox{odd}.\\
\end{eqnarray*}
Note that if $G$ is cubic and bridgeless, the vector $(1/3,
1/3, \dots, 1/3)$ always belongs to $\mathcal{P}(G)$.

\begin{proof}[Proof of Proposition~\ref{prop:basic}]
Let $\mathcal{P}$ be the perfect matching polytope of $G$;
since $G$ is cubic and bridgeless, $\mathcal{P}$ is nonempty.
Consider the function
$$
f(\mathbf{x})=\sum_{C\in\mathcal{C}}\, \sum_{e\in\delta(C)}\mathbf{x}(e)
$$
defined for each $\mathbf{x}\in \mathcal{P}$. The function $f$
is linear, hence there is a vector $\mathbf{x}_0$ such that
$f(\mathbf{x}_0)$ is minimal and $\mathbf{x}_0$ is a vertex of
$\mathcal{P}$. Since $(1/3,1/3,\dots,1/3)\in \mathcal{P}$, we
have $f(\mathbf{x}_0)\le 5/3\cdot |\mathcal{C}|$.

Let $M$ be the perfect matching corresponding to $\mathbf{x}_0$
and let $F$ be the $2$-factor complementary to $M$. Assume that
$F$ contains $k$ circuits from $\mathcal{C}$. If a $5$-circuit
$C\in \mathcal{C}$ belongs to~$F$, it adds $5$ to the sum in
$f(\mathbf{x}_0)$. If $C$ does not belong to~$F$, it adds at
least~$1$. Altogether $f(\mathbf{x}_0)\ge 5k+(|\mathcal{C}|-k)
= |\mathcal{C}|+4k$. Since $f(\mathbf{x}_0)\le 5/3\cdot
|\mathcal{C}|$, we obtain $k\le |\mathcal{C}|/6$.
\end{proof}

By inspecting all the vertices at distance at most $2$ from a
given vertex one can see that every vertex of a cubic graph is
contained in at most six $5$-circuits. In fact, the Petersen
graph is the only cubic graph that has a vertex contained in
precisely six $5$-circuits. As the Petersen graph is
vertex-transitive, it follows that it contains precisely twelve
$5$-circuits.

Consider the set $\mathcal{C}$ of all $5$-circuits of the
Petersen graph. Since every $2$-factor of the Petersen graph
consists of two $5$-circuits \cite{devos}, the constant $1/6$
in Proposition~\ref{prop:basic} is best possible.

\begin{corollary}\label{cor:avoid}
Let $G$ be a snark different from the Petersen graph. For every
vertex $v$ of~$G$ there exists a $2$-factor $F$ of $G$ such
that every $5$-circuit of $F$ misses $v$.
\end{corollary}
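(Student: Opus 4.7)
The plan is to apply Proposition~\ref{prop:basic} to the set $\mathcal{C}_v$ consisting of all $5$-circuits of $G$ that pass through the chosen vertex $v$. By the observation preceding the corollary, every vertex of a cubic graph lies in at most six $5$-circuits, and equality characterises the Petersen graph. Since $G$ is assumed to be different from the Petersen graph, we get the strict bound $|\mathcal{C}_v|\le 5$.

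Next, since $G$ is a snark, it is in particular bridgeless and cubic, so Proposition~\ref{prop:basic} is applicable: there exists a $2$-factor $F$ of $G$ such that the number $k$ of circuits of $\mathcal{C}_v$ lying in $F$ satisfies $k\le |\mathcal{C}_v|/6\le 5/6$. Because $k$ is a nonnegative integer, this forces $k=0$. Consequently no $5$-circuit of $F$ passes through $v$, which is exactly the statement that every $5$-circuit of $F$ misses $v$.

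There is no real obstacle here once Proposition~\ref{prop:basic} and the local count of $5$-circuits at a vertex are in hand; the only subtlety is noticing that $|\mathcal{C}_v|\le 5$ (and not just $\le 6$) is what is needed to make the bound $|\mathcal{C}_v|/6<1$ strict, and it is precisely the exclusion of the Petersen graph that secures this strictness.
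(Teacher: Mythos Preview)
Your proof is correct and follows essentially the same approach as the paper: define $\mathcal{C}$ to be the set of $5$-circuits through $v$, use the observation preceding the corollary to get $|\mathcal{C}|\le 5$ since $G$ is not the Petersen graph, and then apply Proposition~\ref{prop:basic} to obtain a $2$-factor containing at most $5/6<1$ members of $\mathcal{C}$, hence none.
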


\begin{proof}
Let $\mathcal{C}$ be the set of all $5$-circuits passing
through a given vertex $v$ of $G$. Since $G$ is not the
Petersen graph, $|\mathcal{C}|\le 5$. By
Proposition~\ref{prop:basic}, there is a $2$-factor $F$ that
contains at most $5/6<1$ circuits from $\mathcal{C}$. Thus $F$
contains no $5$-circuit passing through $v$.
\end{proof}

To prove the main result of this section we need several
lemmas.

\begin{lemma}\label{lemma:5circuits}
Let $G$ be a snark of order $n$ different from the Petersen
graph. Assume that $G$ has $n_i$ vertices contained in exactly
$i$ $5$-circuits. Then the following are true:
\begin{itemize}
\item[{\rm (i)}] $n_6 = 0$
\item[{\rm (ii)}] $n_5 \le 2n/5$
\item[{\rm (iii)}] If $G$ is cyclically $3$-connected, then
    $n_5=0$.
\item[{\rm (iv)}] If $G$ is cyclically $5$-connected, then
    $n_4=0$.
\end{itemize}
\end{lemma}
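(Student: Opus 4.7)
Part (i) is immediate from the observation made just before the lemma: every vertex of a cubic graph lies in at most six $5$-circuits, with equality characterising the Petersen graph; since $G \neq $ Petersen, we obtain $n_6 = 0$.

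For parts (iii) and (iv), the strategy is to analyse the ball $B_2(v) = \{u\colon d(v,u)\le 2\}$ around a vertex $v$ that is contained in many $5$-circuits. Write $N(v) = \{a,b,c\}$ and let $a_1, a_2, b_1, b_2, c_1, c_2$ denote the remaining neighbours of $a, b, c$. When $G$ has girth at least $5$, these ten vertices are pairwise distinct, and each $5$-circuit through $v$ corresponds bijectively to an edge between two second neighbours lying in distinct branches. A count of half-edges at the second neighbours then shows that if exactly $k$ five-circuits pass through $v$, then precisely $12 - 2k$ edges leave $B_2(v)$. Thus $v \in n_5$ yields a $2$-edge-cut and $v \in n_4$ yields a $4$-edge-cut; a routine degree calculation confirms that when $|V(G)| > 10$ the complement of $B_2(v)$ necessarily contains a cycle, making these cuts cycle-separating and contradicting cyclic $3$-connectivity in~(iii) and cyclic $5$-connectivity in~(iv). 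The case $|V(G)| = 10$ is excluded because Petersen is the only snark of that order. The main technical point is to handle girths $3$ and $4$: in (iv) cyclic $5$-connectivity already forbids cycle-separating edge-cuts of size at most $4$, so triangles and quadrilaterals are automatically excluded once $|V(G)|$ is large enough, effectively reducing us to the girth-$\ge 5$ situation; in (iii) a short case analysis is needed, because triangles and $4$-circuits are compatible with cyclic $3$-connectivity, and in each such case one must recover a suitable cycle-separating edge-cut.

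Part (ii) is the most delicate, since no cyclic connectivity hypothesis is available. The plan is to combine the local description above---each $v \in n_5$ determines a ball $B_2(v)$ of at most ten vertices bordered by a $2$-edge-cut in the girth-$\ge 5$ case---with the double count $\sum_v c(v) = 5|\mathcal{C}_5|$, where $c(v)$ is the number of $5$-circuits through $v$. The main obstacle is controlling the overlap of the balls $B_2(v)$ for distinct $v \in n_5$: a naive disjointness argument fails, because two $n_5$-vertices may lie close to one another. The plan is to classify pairs $v, v' \in n_5$ by their mutual distance and, in each case, to pin down the rigid structure of $B_2(v) \cap B_2(v')$, permitting only a controlled overlap; a parallel case analysis by girth will handle the short-girth contributions. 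Aggregating across all of $n_5$, we expect each vertex of $n_5$ to contribute at least $5/2$ fresh vertices of $G$ on average, which yields $n \ge \tfrac{5}{2} n_5$, equivalent to $n_5 \le 2n/5$.
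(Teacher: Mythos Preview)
Your plan for parts (i), (iii), and (iv) is essentially the paper's approach: analyse the ball $B_2(v)$, count outgoing edges, and obtain a small cycle-separating cut.

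For part (ii), your outline would work but is more laborious than what the paper does; you are missing two simplifications. First, the paper shows directly that a vertex $v$ lying on five $5$-circuits can never lie on a triangle or a $4$-circuit: one of the three edges at $v$ must lie on four of the five $5$-circuits and the other two on three each, which already rules out a triangle through $v$; and if $v$ were on a $4$-circuit the second neighbourhood would have at most five vertices, one of them of degree at most $1$ in the induced subgraph $H$, contradicting the fact that $H$ must carry five edges. So the ``girth $\ge 5$'' local picture always holds at an $n_5$-vertex, and the parallel short-girth case analysis you anticipate is unnecessary.

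Second, and more importantly, the paper sidesteps your overlap analysis entirely. Having established that $K=G[B_2(v)]$ has exactly ten vertices and is separated from the rest of $G$ by a $2$-edge-cut, the paper observes that no \emph{internal} edge of $K$ lies in a $2$-edge-cut of $G$; consequently, when one greedily picks a new $n_5$-vertex $v'$ not yet covered, its ball $K'$ is disjoint from all previously chosen balls. A short case analysis inside a single $K$ then shows that at most four of its ten vertices can themselves be $n_5$-vertices. This immediately gives $n_5\le \tfrac{4}{10}n$, replacing your distance classification and the heuristic ``$5/2$ fresh vertices on average'' by a clean disjoint-covering argument. Your route could be pushed through, but it is the hard way to reach the same constant.
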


\begin{proof}
We have already discussed the statement (i) before
Corollary~\ref{cor:avoid}. We give a detailed proof only for
(ii); the remaining statements follow by similar
considerations.

Let $v$ be a vertex of $G$ contained in five $5$-circuits. It
follows that one of the edges incident with $v$ belongs to four
of the $5$-circuits and the other two edges belong to three of
them. In particular, $v$ cannot be contained in a triangle.

Consider the subgraph $H$ of $G$ induced by the set of all
vertices at distance $2$ from $v$. Each $5$-circuit passing
through $v$ contributes one edge to $H$, so $H$ has five edges
and at most six vertices. If $v$ belonged to a $4$-circuit,
there would be at most $5$ vertices in $H$ and at least one of
them would be of degree at most $1$ in $H$. But then $H$ could
not have five edges. So $v$ belongs to no $4$-circuit and $H$
has six vertices.

Let $K$ be the subgraph of $G$ induced by all vertices at
distance at most $2$ from $v$. Since $H$ has five edges, there
are two edges separating $K$ from the rest of $G$. The subgraph
$K$ has ten vertices, and a short case analysis shows that $K$
contains at most three other vertices contained in five
$5$-circuits in~$G$. Hence, at most four of the ten vertices of
$K$ are contained in five $5$-circuits. By repeating this
argument for any vertex $v$ contained in five $5$-circuits
which we have not counted yet we arrive at the conclusion that
$n_5\le 4n/10=2n/5$. Note that no edge of $K$ belongs to a
$2$-edge-cut, therefore all the subgraphs $K$ arising in the
described way are pairwise disjoint.
\end{proof}

\begin{lemma}\label{lemma:doublecounting}
Let $G$ be a bridgeless cubic graph of order $n$ that has $n_i$
vertices contained in exactly $i$ $5$-circuits. Let
$\mathcal{C}$ be the set of all $5$-circuits of $G$. Then
$$
|\mathcal{C}|={1\over 5}\sum_{i=0}^6 i\cdot n_i.
$$
\end{lemma}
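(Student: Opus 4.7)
The plan is a straightforward double counting of incidences between vertices and $5$-circuits. Let
$$
I = \{(v,C) : v \in V(G),\ C \in \mathcal{C},\ v \in V(C)\}.
$$
I would estimate $|I|$ in two ways.

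First I would count by vertices. For each vertex $v$, the number of pairs in $I$ with first coordinate $v$ is exactly the number of $5$-circuits through $v$. Recall from the discussion preceding Corollary~\ref{cor:avoid} that every vertex of a cubic graph lies in at most six $5$-circuits, so the partition of $V(G)$ according to this number is indexed by $i\in\{0,1,\dots,6\}$ and has class sizes $n_0,n_1,\dots,n_6$. Summing over the classes gives
$$
|I| = \sum_{i=0}^{6} i\cdot n_i.
$$

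Next I would count by circuits. Every member of $\mathcal{C}$ has length $5$ and hence contains exactly $5$ vertices, so
$$
|I| = 5\cdot |\mathcal{C}|.
$$
Equating the two expressions and dividing by $5$ yields the desired identity. No obstacle is anticipated; the only thing worth flagging explicitly is why the sum can be truncated at $i=6$, which is precisely the bound recalled above.
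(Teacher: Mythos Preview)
Your proof is correct and follows exactly the same double-counting argument as the paper: count the pairs $(v,C)$ with $v\in C\in\mathcal{C}$ once by vertices and once by circuits. The only addition is your explicit remark on why the sum stops at $i=6$, which is harmless.
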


\begin{proof}
The desired equality immediately follows from counting the
number of pairs $(v, C)$, where $C\in \mathcal{C}$ and $v$ is a
vertex contained $C$, in two ways.
\end{proof}

\begin{lemma}\label{lemma:bound}
Let $G$ be a snark of order $n$ with girth at least $4$. If $G$
has $q$ circuits of length~$5$, then
$$
\omega(G)\le \frac{3n+q}{21}.
$$
\end{lemma}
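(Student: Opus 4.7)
The plan is to apply Proposition~\ref{prop:basic} to the set $\mathcal{C}$ of all $5$-circuits of $G$ (so $|\mathcal{C}|=q$) to obtain a $2$-factor $F$ containing at most $q/6$ of them, and then turn a length argument into an upper bound on the number of odd circuits in $F$.

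Concretely, let $a$ denote the number of $5$-circuits of $F$ and $b$ the number of odd circuits of $F$ of length different from $5$. Since $G$ has girth at least $4$, it has no triangles, so every odd circuit of $F$ has length at least $5$; the $b$ odd circuits counted above therefore have length at least $7$. Counting vertices in $F$ gives $5a+7b\le n$, which rearranges to
\begin{equation*}
b\le \frac{n-5a}{7}.
\end{equation*}
Since $\omega(G)$ is bounded above by the total number of odd circuits in $F$, we get
\begin{equation*}
\omega(G)\le a+b\le a+\frac{n-5a}{7}=\frac{n+2a}{7}.
\end{equation*}

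Finally I would substitute the bound $a\le q/6$ provided by Proposition~\ref{prop:basic} to obtain
\begin{equation*}
\omega(G)\le \frac{n+2\cdot(q/6)}{7}=\frac{n+q/3}{7}=\frac{3n+q}{21},
\end{equation*}
which is the desired inequality. There is no real obstacle here beyond noting that the girth hypothesis is exactly what forces odd non-$5$-circuits to have length at least $7$; the rest is bookkeeping, and the coefficient $1/6$ coming from Proposition~\ref{prop:basic} is precisely what yields the $3n+q$ numerator after multiplying through by~$21$.
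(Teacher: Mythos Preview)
Your proof is correct and follows essentially the same approach as the paper: both apply Proposition~\ref{prop:basic} to the set of all $5$-circuits to obtain a $2$-factor with at most $q/6$ five-circuits, and then bound the number of odd circuits by a length count exploiting that non-$5$ odd circuits have length at least~$7$. Your presentation is in fact more direct than the paper's, which reaches the same inequality via a longer chain of manipulations of the ratio $n/\omega$.
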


\begin{proof}
Let $\omega$ be the oddness of $G$. By
Proposition~\ref{prop:basic}, there is a $2$-factor $F$ of $G$
which contains at most $q/6$ circuits of length $5$. Let $p_i$
denote the number of $i$-circuits in $F$ and let $s =
p_7+p_9+\dots$. Note that $p_5\le q/6$. We are minimising the
value of $n/\omega$ subject to the constraints
\begin{eqnarray*}
n &=& 4p_4+5p_5+6p_6+7p_7+\dots,\\
\omega &=& p_5+p_7+p_9+\dots\ .\\
\end{eqnarray*}
Roughly speaking, the ``worst case'' occurs when $F$ contains
no circuits of even length, there is a maximal possible number
of $5$-circuits, and all the longer odd circuits have length
$7$. In other words,
\begin{eqnarray*}
{n\over \omega}
& = & {4p_4+5p_5+6p_6+7p_7+\dots\over p_5+p_7+p_9+\dots}
      \ge {5p_5+7s\over p_5+s}=5+{2s\over p_5+s}\ge 5+{2s\over{1\over 6}q+s}\\
& = & 5 +{12\over {q\over s}+6}=5+{12\over {q\over\omega-p_5}+6}
      \ge 5+{12\over {6q\over 6\omega-q}+6}= 5+ {{6\omega-q}\over{3\omega}}.
\end{eqnarray*}
The resulting inequality is equivalent to the desired one.
\end{proof}

\begin{theorem}\label{thm:lowerbd}
Let $G$ be a snark of order $n$ different from the Petersen
graph. Then:
\begin{itemize}
\item[{\rm (i)}]  $n/\omega(G)\ge 525/97 > 5.41$;

\item[{\rm (ii)}] $n/\omega(G)\ge 105/19 > 5.52$, if $G$ is
    cyclically $3$-connected;

\item[{\rm (iii)}] $n/\omega(G)\ge 35/6 > 5.83$, if $G$ is
    cyclically $5$-connected.
\end{itemize}
\end{theorem}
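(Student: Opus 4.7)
The plan is to derive all three bounds by uniformly combining Lemmas~\ref{lemma:5circuits}, \ref{lemma:doublecounting}, and \ref{lemma:bound}, after first reducing to the case that $G$ has girth at least $4$. Concretely, I will apply Lemma~\ref{lemma:girth4} to replace $G$ by a snark $G'$ with $n(G') \le n(G)$, $\omega(G') = \omega(G)$, $\zeta(G') \ge \zeta(G)$, and girth at least $4$; a lower bound on $n(G')/\omega(G')$ then transfers to $n(G)/\omega(G)$ and automatically preserves the cyclic-connectivity hypothesis used in (ii) and (iii). The one subtlety is that $G'$ could coincide with the Petersen graph even when $G$ does not: since each contraction of a $2$-circuit or $3$-circuit removes exactly two vertices, this forces $n(G) \ge 12$ while $\omega(G) = 2$, so $n(G)/\omega(G) \ge 6$, which already exceeds each of the target bounds $525/97$, $105/19$, and $35/6$. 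From now on I may therefore assume that $G$ itself has girth at least $4$ and is distinct from the Petersen graph.

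Writing $q = |\mathcal{C}|$ for the number of $5$-circuits of $G$, Lemma~\ref{lemma:5circuits}(i) gives $n_6 = 0$, and together with $\sum_{i=0}^{5} n_i = n$ and the bound $n_5 \le 2n/5$ from Lemma~\ref{lemma:5circuits}(ii), the double-counting identity of Lemma~\ref{lemma:doublecounting} yields
\[
q \;=\; \frac{1}{5}\sum_{i=0}^{5} i\, n_i \;\le\; \frac{5n_5 + 4(n - n_5)}{5} \;=\; \frac{4n + n_5}{5} \;\le\; \frac{22n}{25}.
\]
Inserting this into Lemma~\ref{lemma:bound} gives $\omega(G) \le (3n + 22n/25)/21 = 97n/525$, which rearranges to (i). For (ii), cyclic $3$-connectivity passes from $G$ to $G'$, so Lemma~\ref{lemma:5circuits}(iii) additionally gives $n_5 = 0$; the same bookkeeping then produces $q \le 4n/5$ and $\omega(G) \le 19n/105$. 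For (iii), cyclic $5$-connectivity already implies girth at least $5$, so no girth reduction is needed, and Lemma~\ref{lemma:5circuits}(iv) further gives $n_4 = 0$, whence $q \le 3n/5$ and $\omega(G) \le 6n/35$.

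The main difficulty in carrying this plan out is not analytical but organisational: one must check that the girth-reduction step preserves both the oddness and the cyclic connectivity needed in (ii) and (iii), and the possibility that the reduction terminates at the Petersen graph must be flagged and dispatched separately, as above. Once those administrative points are settled, each of the three parts collapses to a one-line arithmetic estimate built from the preparatory lemmas, so no additional graph-theoretic tools will be required.
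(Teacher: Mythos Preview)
Your proposal is correct and follows essentially the same route as the paper: reduce via Lemma~\ref{lemma:girth4} to girth $\ge 4$, bound the number $q$ of $5$-circuits by combining Lemma~\ref{lemma:5circuits} with Lemma~\ref{lemma:doublecounting}, and feed the resulting inequality into Lemma~\ref{lemma:bound}. The one place where you are in fact more careful than the paper is your explicit handling of the possibility that the girth reduction terminates at the Petersen graph (so that Lemma~\ref{lemma:5circuits} would not apply to $G'$); the paper silently assumes this does not happen, whereas you dispose of it via the observation that each contraction of a digon or triangle drops the order by exactly two, giving $n\ge 12$, $\omega=2$, and hence $n/\omega\ge 6>35/6$.
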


\begin{proof}
According to Lemma \ref{lemma:girth4}, it is enough to prove
the theorem provided that $G$ has girth at least $4$.
Lemma~\ref{lemma:5circuits} asserts that each vertex of $G$ is
contained in at most five $5$-circuits and that at most $2n/5$
vertices are contained in five $5$-circuits. Let $\mathcal{C}$
be the set of all $5$-circuits of $G$. By using
Lemma~\ref{lemma:doublecounting} we have
$$
|\mathcal{C}|\le {1\over 5}(5n_5+4(n-n_5))=
{4\over 5}n+{1\over 5}n_5\le {22\over 25}n,
$$
and Lemma~\ref{lemma:bound} in turn yields that $n/\omega(G)\ge
525/97>5.41$. This proves (i).

To prove (ii), let $G$ be cyclically $3$-connected. Again, by
Lemma~\ref{lemma:girth4} we may assume $G$ to have girth at
least $4$. Let $\mathcal{C}$ be the set of all $5$-circuits of
$G$. From Lemma~\ref{lemma:5circuits} we infer that each vertex
of $G$ is contained in at most four $5$-circuits.  According to
Lemma~\ref{lemma:doublecounting} we  have $|\mathcal{C}| \le
{4\over 5}n$ and from Lemma~\ref{lemma:bound} we get
$n/\omega(G)\ge 105/19$, as claimed.

Finally, if $G$ is cyclically $5$-connected, each vertex of $G$
is contained in at most three $5$-circuits, by Lemma
\ref{lemma:5circuits}. According to
Lemma~\ref{lemma:doublecounting}, $|\mathcal{C}| \le {3\over
5}n$, and from Lemma~\ref{lemma:bound} we get $n/\omega(G)\ge
105/18 = 35/6$, which proves (iii).
\end{proof}

\section{Construction methods}\label{sec:blocks}

In the next few sections we construct snarks with small size
when compared to their oddness. Most of our constructions
produce larger graphs by putting smaller parts together. For
this purpose the following terminology will be useful: A
\emph{network} is a pair $(G,T)$ consisting of a graph $G$ and
a distinguished set $T$ of vertices of degree $1$  called
\emph{terminals}. An edge incident with a terminal is called a
\emph{terminal edge}. A network with $k$ terminals will be
called a $k$-\emph{pole}. In all networks considered below
nonterminal vertices will always have degree $3$.

Each terminal of a network serves as a place of connection with
another terminal. Two terminal edges of either the same network
or of two disjoint networks can be naturally joined to form a
new nonterminal edge by identifying the corresponding terminal
vertices and suppressing the resulting $2$-valent vertex. This
operation is called the \emph{junction} of two terminals.

A standard way to create terminal vertices in a graph or
network is by \emph{splitting off} a vertex $v$ from a
graph~$G$; by this we mean the removal of $v$ from $G$ and
attaching a terminal vertex to each dangling edge originally
incident with $v$. The terminals resulting from splitting off
the vertex $v$ from $G$ will be said to be \emph{corresponding
to} $v$.

The basic building blocks for our constructions are the
following six networks obtained from the Petersen graph. All of
them are displayed in Figure~\ref{fig:pet} together with their
simplified diagrams used in the rest of the paper.

\begin{figure}
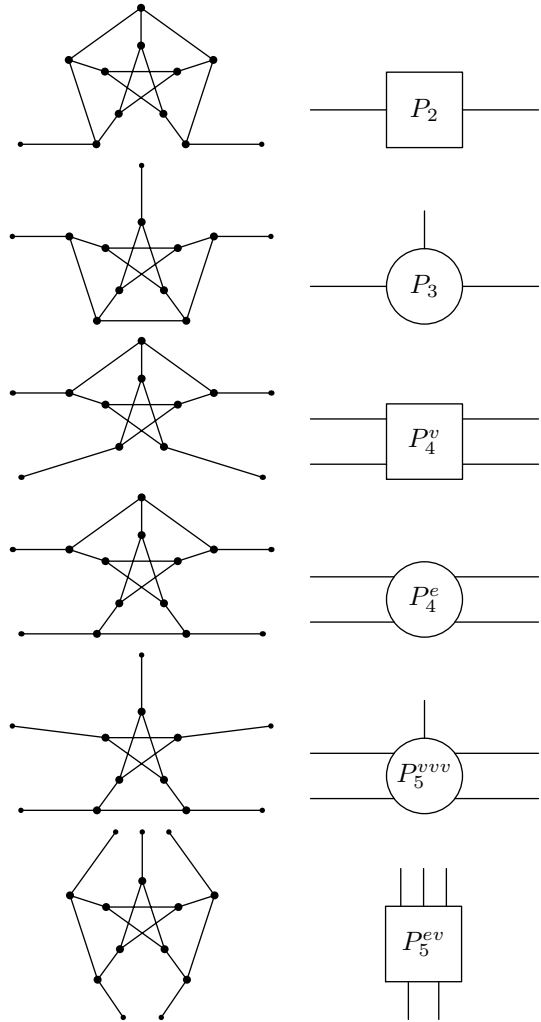

\center
\begin{tabular}{ccc}
\includegraphics{multipoles-1} & \includegraphics{schemmultip-1}\\
\includegraphics{multipoles-2} & \includegraphics{schemmultip-2}\\
\includegraphics{multipoles-3} & \includegraphics{schemmultip-3}\\
\includegraphics{multipoles-4} & \includegraphics{schemmultip-4}\\
\includegraphics{multipoles-5} & \includegraphics{schemmultip-5}\\
\includegraphics{multipoles-6} & \includegraphics{schemmultip-6}\\
\end{tabular}
\caption{Networks $P_2$, $P_3$, $P_4^v$, $P_4^e$, $P_5^{vvv}$, $P_5^{ev}$
and their diagrams.}
\label{fig:pet}
\end{figure}

\begin{itemize}
\item The network  $P_2$ is formed from the Petersen graph
    by subdividing an edge and splitting off the new
    vertex. This network has ten nonterminal vertices. By
    the Parity Lemma, it is not 3-edge-colourable.
\item The network $P_3$ is formed from $P$ by splitting off
    a vertex. This network has nine nonterminal vertices
    and is not $3$-edge-colourable for similar reasons.
\item The network $P_4^v$ is formed from $P$ by deleting an
    edge and splitting off the two resulting vertices of
    degree $2$. The terminal edges constitute two pairs,
    each pair corresponding to one vertex of degree two.
    Every $3$-edge-colouring of $P_4^v$ must assign the
    same colour to both edges within the same pair,
    otherwise one could properly colour the edges of $P$
    with three colours. This network has eight nonterminal
    vertices.
\item The network $P_4^e$ is formed from $P$ by subdividing
    two edges at distance $1$ and splitting off the
    resulting vertices of degree $2$. The terminal edges
    again constitute two pairs, each pair corresponding to
    one vertex of degree two. Every $3$-edge-colouring of
    $P_4^e$ must assign different colours to the edges in
    the same pair of terminal edges, otherwise one could
    properly colour the edges of $P$ with three colours.
    This network has ten nonterminal vertices.
\item The network $P_5^{vvv}$ is formed from $P$ by
    deleting two adjacent edges and splitting off the two
    resulting vertices of degree $2$. The terminal edges
    constitute two pairs, each pair corresponding to one
    vertex of degree two, and one single terminal edge.
    Since there are five terminal edges, the Parity Lemma
    implies that in every $3$-edge-colouring of $P_5^{vvv}$
    one colour must be used exactly three times and the
    other two colours must be used once. As we cannot
    colour $P$, every $3$-edge-colouring of $P_5^{vvv}$
    must assign the same colours to the terminal edges of
    one of the two pairs. The other pair of terminal edges
    must have different colours assigned. This network has
    seven nonterminal vertices.
\item The network $P_5^{ev}$ is formed from $P$ by
    splitting of a vertex $v$ and subdividing an edge at
    distance two from $v$ and by subsequently splitting off
    the new vertex of degree $2$. This creates one pair and
    one triple of terminal edges, respectively. Since $P$
    is not $3$-edge colourable, the Parity Lemma implies
    that the terminal edges contained in the pair must have
    different colours. This network has nine nonterminal
    vertices.
\end{itemize}

Another construction method which we employ is superposition
\cite{superposition}. Given a cubic graph $G$, take a
collection $\{X_v;\, v\in V(G)\}$ of disjoint networks called
\emph{supervertices} and a collection $\{Y_e;\, e\in E(G)\}$ of
disjoint networks called \emph{superedges}. Each supervertex is
a network whose terminals are partitioned into three subsets
and each superedge is a network whose terminals are partitioned
into two subsets; the partition sets are called
\emph{connectors}. For each vertex $v$ of $G$, associate each
connector of the supervertex $X_v$ with the end of an edge of
$G$ incident with $v$ in such a way that no two connectors are
associated with the same end. For each edge $e$ of $G$,
associate each connector of the superedge $Y_e$ with an end of
the edge $e$ in such a way that the connectors corresponding to
an incidence between a vertex and an edge in $G$ have the same
size; again, the connectors of $Y_e$ are associated with
different ends of $e$. Now, substitute each vertex $v$ of $G$
with the supervertex $X_v$ and each edge $e$ of $G$ with the
superedge $Y_e$, and perform all the junctions between
supervertices and superedges that correspond to the incidences
between vertices and edges of $G$. Let $\tilde G$ be the
resulting graph. Note that the graph $\tilde G$ is cubic since
all the nonterminal vertices in supervertices and superedges
have degree $3$. We call $\tilde G$ a \textit{superposition} of
$G$.

There is a natural incidence-preserving surjective mapping
$p\colon \tilde G\to G$, called a \textit{projection}, such
that for each vertex $\tilde v$ of $\tilde G$ the image
$p(\tilde v)$ is a vertex of $G$ and for each edge $\tilde e$
of $\tilde G$ the image $p(\tilde e)$ is either an edge of $G$
or a vertex resulting from the contraction of~$\tilde e$. This
mapping takes every nonterminal vertex from a supervertex $X_v$
to $v$ itself, and every nonterminal vertex from a superedge
$X_e$ that replaces an edge $e=uv$ of $G$ to either $u$ or~$v$.
It should be remarked that $p$ is not uniquely determined, but
the difference between any two such mappings is insubstantial.

Note that in a superposition $\tilde G$ of $G$ a vertex $v$ of
$G$ may be substituted with a \textit{trivial supervertex}
$X_v$, one which consists of a single nonterminal vertex
$\tilde v$ incident with three pendant edges whose pendant
vertices are terminals. Similarly, an edge $e$ may be
substituted with a \textit{trivial superedge} $Y_e$ which
consists of a single edge whose endvertices are both terminals.
If all the substitutions are trivial, then $p\colon \tilde G\to
G$ is clearly an isomorphism.

A superedge $Y$ is \emph{proper} if for every
$3$-edge-colouring of $Y$ that uses nonzero elements of
$\mathbb{Z}_2 \times \mathbb{Z}_2$ as colours the sum of
colours on the terminal edges incident with the terminals in
either connector is nonzero. In particular, a trivial superedge
is proper. It follows from the Parity Lemma that the two sums
must be the same element of $\mathbb{Z}_2 \times \mathbb{Z}_2$.

There is a standard method of constructing proper superedges
from a snark. Take a snark $H$, and choose a vertex $v$ which
can be either a vertex of $H$ or a $2$-valent vertex arising
from the subdivision of an arbitrary edge of $H$. Let us create
a connector by splitting off $v$ from $H$ and repeat the
operation with another such vertex. Clearly, splitting off a
vertex of $H$ creates a connector of size $3$ whereas splitting
off a subdivision vertex creates a connector of size $2$. It is
an easy consequence of the Parity Lemma that any superedge
arising in the just described way is always proper.

A superposition $\tilde G$ of $G$ is \emph{proper} if $G$ is a
snark and  every superedge is proper. We claim that in this
case $\tilde G$ cannot be $3$-edge-colourable. Suppose it is,
and consider a $3$-edge-colouring of $\tilde G$ that uses
nonzero elements of $\mathbb{Z}_2 \times \mathbb{Z}_2$ as
colours. We show that the projection $p\colon \tilde G\to G$
induces a $3$-edge-colouring of $G$.  We can colour each edge
$e$ of $G$ with the element of $\mathbb{Z}_2 \times
\mathbb{Z}_2$ equal to the sum of colours occurring on the
terminal edges of either connector of $Y_e$. This colouring is
easily seen to be a proper $3$-edge-colouring of~$G$, which is
impossible since $G$ is a snark. It follows that $\tilde G$ is
a snark.

The following more general result is useful for constructing
graphs with large resistance or oddness.

\begin{proposition}\label{prop:superresist}
Let $\tilde G$ be a snark resulting from a proper superposition
of a snark $G$. Then $\rho(\tilde G)\ge \rho(G)$.
\end{proposition}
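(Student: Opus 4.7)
The plan is to show $\rho(G)\le\rho(\tilde G)$ directly, by turning a minimum edge-deletion witness for $\tilde G$ into one for $G$ of no larger size. Let $\tilde R\subseteq E(\tilde G)$ be a set of $\rho(\tilde G)$ edges such that $\tilde G-\tilde R$ admits a proper $3$-edge-colouring $c\colon E(\tilde G)\setminus\tilde R\to\{01,10,11\}\subseteq\mathbb{Z}_2\times\mathbb{Z}_2$, and extend $c$ to all of $E(\tilde G)$ by setting $c(\tilde f)=0$ for $\tilde f\in\tilde R$. The superposition partitions $E(\tilde G)$ into edges interior to a supervertex $X_v$, edges interior to a superedge $Y_e$, and \emph{junction edges} (each joining some $X_v$ to some $Y_e$ through an incidence of $G$). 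For every $e=uv\in E(G)$, write $\sigma_u(e)$ and $\sigma_v(e)$ for the sums of $c$ over the $u$- and $v$-connectors of $Y_e$.

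I would then construct $R\subseteq E(G)$ by charging each $\tilde f\in\tilde R$ to an edge of $G$: if $\tilde f$ is interior to $Y_e$ or is a junction edge adjacent to $Y_e$, charge $\tilde f$ to $e$; if $\tilde f$ is interior to some $X_v$, charge it to a carefully chosen edge of $G$ incident with $v$. Because the territories $T_e=E(Y_e)\cup\{\text{junction edges of }Y_e\}$ are pairwise disjoint across different $e\in E(G)$, and are disjoint from the supervertex interiors, this yields $|R|\le|\tilde R|$. For every $e\notin R$ the set $\tilde R$ avoids $T_e$, so the restriction of $c$ to $Y_e$ together with its connectors is a proper $3$-edge-colouring of the network $Y_e$; by properness of the superedge one obtains $\sigma_u(e)=\sigma_v(e)\ne 0$, and I set $\sigma(e):=\sigma_u(e)$.

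To confirm that $\sigma$ gives a proper $3$-edge-colouring of $G-R$, I would apply the Parity Lemma on $V(X_v)$, yielding
$$
\sum_{e\ni v}\sigma_v(e)\;=\;\delta_v\;:=\;\sum_{\tilde v\in V(X_v)}\chi(\tilde v),
$$
where $\chi(\tilde v)=\sum_{\tilde f\ni\tilde v}c(\tilde f)$ vanishes at every $\tilde v$ incident with no edge of $\tilde R$. If $\tilde R$ does not meet the territory of $X_v$, then $\delta_v=0$, all three edges at $v$ lie outside $R$, and their $\sigma$-values are nonzero and sum to $0$, so they form the three distinct nonzero elements of $\mathbb{Z}_2\times\mathbb{Z}_2$, as required.

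The main obstacle is the case in which $\tilde R$ does meet the territory of some $X_v$, producing $\delta_v\ne 0$; then the three values $\sigma_v(e_i)$ at $v$ sum to $\delta_v$ and need not be pairwise distinct. One must argue that enough incident edges of $v$ are forced into $R$ by the construction, and that the flexibility in the choice coming from each $\tilde f\in E(X_v)\cap\tilde R$ can be exercised so that the edges at $v$ surviving in $G-R$ receive pairwise distinct nonzero $\sigma$-values. A short case analysis on the multiset $\{\sigma_v(e_1),\sigma_v(e_2),\sigma_v(e_3)\}$ in $\mathbb{Z}_2\times\mathbb{Z}_2$ relative to $\delta_v$, exploiting that each hit of $\tilde R$ inside or adjacent to $X_v$ contributes one charge at $v$, confirms the required choice. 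Combined with the disjointness of the territories $T_e$ this delivers $\rho(G)\le|R|\le|\tilde R|=\rho(\tilde G)$, proving the proposition.
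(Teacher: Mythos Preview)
Your route differs from the paper's, which uses the vertex-deletion form of resistance together with the projection $p\colon\tilde G\to G$ defined just before the proposition: pick a minimum $W\subseteq V(\tilde G)$ with $\tilde G-W$ colourable, and observe that a proper $3$-edge-colouring of $\tilde G-W$ induces one of $G-p(W)$ in exactly the way described in the paragraph preceding the proposition; then $\rho(G)\le|p(W)|\le|W|=\rho(\tilde G)$. This occupies four lines and needs no charging scheme at all.

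Your edge-based argument, by contrast, has a genuine gap. You identify ``the main obstacle'' as the case $\delta_v\ne 0$, but there is a second situation you do not address. Suppose the only edge of $\tilde R$ relevant to a vertex $v$ lies in the interior of $Y_{e_1}$ for some $e_1\ni v$, so that $e_1\in R$ by your charging rule, while $X_v$ together with all its junction edges is untouched by $\tilde R$ and hence $\delta_v=0$. Then $\sigma_v(e_1)+\sigma(e_2)+\sigma(e_3)=0$, and to conclude $\sigma(e_2)\ne\sigma(e_3)$ you need $\sigma_v(e_1)\ne 0$. But the properness hypothesis constrains only superedges, not supervertices: if the $e_1$-connector of $X_v$ has size at least two with non-adjacent terminal edges, a proper $3$-edge-colouring of $X_v$ can give those terminal edges the same colour and hence $\sigma_v(e_1)=0$. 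In that event $\sigma(e_2)=\sigma(e_3)$ and your colouring of $G-R$ is not proper at $v$. The flexibility you invoke---re-routing charges coming from edges of $\tilde R$ interior to $X_v$---is unavailable here because there are no such edges; so the promised ``short case analysis'' cannot close the argument as you have set it up.
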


\begin{proof}
Let $W$ be a set of vertices of $\tilde G$ such that $\tilde
G-W$ is a $3$-edge-colourable graph and $|W|=\rho(\tilde G)$ .
Consider the image $p(W)$ under the projection $p\colon\tilde
G\to G$. The restriction of $p$ to $\tilde G-W$ maps $\tilde
G-W$ to $G-p(W)$ and induces a $3$-edge-colouring of $G-p(W)$
from any $3$-edge-colouring of $\tilde G-W$ in the manner
described above. Therefore
$$\rho(\tilde G)=|W|\ge |p(W)|\ge \rho(G),$$
as claimed.
\end{proof}

It is not known whether the statement about oddness analogous
to Proposition~\ref{prop:superresist} is true.

\section{Connectivity 2}\label{sec:2}

In this section we identify the smallest snark with oddness $4$
and construct a family of snarks with oddness $2q$ having less
than $15q$ vertices.

By virtue of Lemma~\ref{lemma:girth5}, we only need to consider
snarks of girth at least $5$. Since cyclically $4$-connected
snarks are catalogued up to order $36$ and have oddness two up
to this order \cite{brinkmann}, we may restrict to graphs that
either have a $2$-edge-cut or a $3$-edge-cut.

\begin{figure}
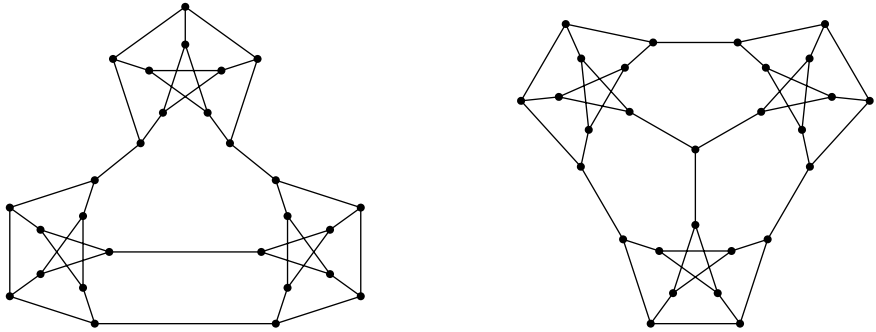

\center
\includegraphics{obrazok3-1}\hskip 2cm
\includegraphics{obrazok2-1}
\caption{The smallest snarks of oddness $4$; both have order $28$.}
\label{fig:snark28}
\end{figure}

\begin{theorem}\label{thm:3connected}
The smallest snark with oddness $4$ has $28$ vertices. There is
one such snark with cyclic connectivity $2$ and one with cyclic
connectivity $3$.
\end{theorem}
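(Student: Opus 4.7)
The plan is in two parts: construct the two claimed $28$-vertex examples to obtain the upper bound, and then use the reduction lemmas together with the catalogue of small snarks to exclude every smaller candidate for the lower bound.

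For the upper bound, I would exhibit the two graphs displayed in Figure \ref{fig:snark28}. Each is a $28$-vertex cubic graph assembled from Petersen-derived networks of Section \ref{sec:blocks}, and the two differ precisely in which of the distinguished small cuts (a $2$-edge-cut or a $3$-edge-cut) is used as the ``gluing'' cut. For each graph the order is immediate, uncolourability follows from the Parity Lemma applied to the gluing cut, and the claimed cyclic connectivity is visible from the diagram. The only delicate point is that the oddness is exactly $4$; this is established by enumerating the possible restrictions of a $2$-factor to each half across the gluing cut and using the known $2$-factor structure of the building blocks (in particular, that every $2$-factor of $P$ consists of two $5$-circuits).

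For the lower bound, let $G$ be a smallest snark with $\omega(G)=4$. By Lemma \ref{lemma:girth5} we may assume $G$ has girth at least $5$. The catalogue of cyclically $4$-edge-connected snarks of order at most $36$ due to Brinkmann et al.\ shows that every such snark has oddness $2$, so $\zeta(G)\le 3$ and $G$ contains a $2$- or $3$-edge-cut. By Lemmas \ref{lemma:2-edge-cut-uncolourable} and \ref{lemma:3-edge-cut-uncolourable} we may further assume that every such small cut separates two uncolourable parts. Closing each side in the standard way (adding an edge between the two $2$-valent vertices in the $2$-cut case, or an apex vertex joining the three half-edges in the $3$-cut case) yields smaller uncolourable cubic graphs; a short connectivity argument (using that $G$ is a snark and each cut component is connected) shows that both closures are bridgeless, and hence snarks on at least $10$ vertices.

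Two cases now arise. If $G$ has a $2$-edge-cut, then $|V(G)|=|V(G_1')|+|V(G_2')|$; since the smallest snark after Petersen has $18$ vertices, the only way to get $|V(G)|<28$ is $|V(G_1')|=|V(G_2')|=10$, that is, both sides equal to $P$. Choosing a $2$-factor of each copy of $P$ containing the closing edge (every edge of $P$ lies in such a $2$-factor, consisting of two $5$-circuits) and combining them across the cut yields a $2$-factor of $G$ consisting of one $10$-circuit and two $5$-circuits, contradicting $\omega(G)=4$. If instead $G$ has no $2$-edge-cut but admits a $3$-edge-cut, the analogous closures satisfy $|V(G)|=|V(G_1')|+|V(G_2')|-2$, and the only possibilities with $|V(G)|<28$ are $|V(G)|\in\{18,26\}$. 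The value $|V(G)|=18$ yields a Blanuša snark, which has oddness $2$, and is therefore excluded. The main obstacle is the case $|V(G)|=26$: a splice of $P$ with a snark on $18$ vertices (one of the two Blanuša snarks) across a $3$-edge-cut. Here one must argue that, for every choice of vertices to split off and every matching of the resulting terminals, the spliced graph has oddness at most $2$. I would proceed by a parity analysis at the cut: starting from a $2$-factor of the Blanuša snark realising its oddness of $2$ and passing through the removed vertex in an odd circuit, together with a $2$-factor of $P$ whose $5$-circuit through the other removed vertex matches it across the cut, the two odd circuits merge into an even one and at most two odd circuits survive. The delicate point is to verify that such a compatible pair of $2$-factors always exists, which amounts to a finite check on the small and well-understood collection of $2$-factors of $P$ and of the two Blanuša snarks.
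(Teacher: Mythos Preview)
Your lower-bound argument rests on the claim that ``the smallest snark after Petersen has $18$ vertices''. This is true only for snarks of girth at least~$5$; once one allows short cycles there are snarks on $12$, $14$, and $16$ vertices (for instance, the Petersen graph with a vertex inflated to a triangle is a cyclically $3$-connected snark on $12$ vertices). The point is that while $G$ has girth $\ge 5$ after applying Lemma~\ref{lemma:girth5}, the closures $G_1'$ and $G_2'$ need not: the added edge $e_i$ (or the added apex vertex in the $3$-cut case) may well create a triangle or a $4$-cycle. Concretely, if $ab$ is a triangle edge of the $12$-vertex snark $P^\Delta$ just mentioned, then $P^\Delta - ab$ has girth $5$ and is a perfectly legitimate candidate for $G_1$; your enumeration misses all the resulting combinations $(10,12)$, $(10,14)$, $(10,16)$, $(12,12)$, $(12,14)$ in the $2$-cut case, and the corresponding splices in the $3$-cut case. (Incidentally, the $18$-vertex graph obtained by splicing two Petersens across a $3$-cut is not a Blanu\v{s}a snark --- it has cyclic connectivity~$3$ --- so that identification is also off, even though its oddness is indeed~$2$.)

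The paper sidesteps this enumeration entirely. In the $2$-edge-cut case it proves a dichotomy: either each $G_i'$ admits a $2$-factor in which the closing edge $e_i$ lies on an odd circuit --- and then gluing two such factors produces a $2$-factor of $G$ with an even circuit through the cut, after which a direct length count forces $|V(G)|\ge 26$ and rules out $26$ by a short parity/size argument --- or some $G_i'$ has a \emph{special} edge lying on no odd circuit of any $2$-factor, which a computer check excludes for all snarks on at most $16$ vertices. In the $3$-edge-cut case the paper simply performs a computer search over all splices of two smaller cyclically $3$-connected snarks. So besides the gap above, be aware that the paper's route is less hand-analytic than you propose and leans on two targeted computer verifications. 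For the upper bound, note also that the paper gets $\omega\ge 4$ more cheaply than your $2$-factor enumeration: each of the two graphs contains three pairwise disjoint copies of $P_3$, whence $\rho\ge 3$ and therefore $\omega\ge 4$.
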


\begin{proof}
Figure~\ref{fig:snark28} shows two snarks of order 28, $H_1$
and $H_2$, with cyclic connectivity $2$ and $3$, respectively.
Each of them contains three disjoint copies of $P_3$, which is
uncolourable, therefore $\rho(H_i)\ge 3$, and hence
$\omega(H_i)\ge 4$. It is not difficult to show that in fact
$\omega(H_1)=\omega(H_2)=4$. In the rest of the proof we show
that there exists no snark with oddness $4$ on fewer than~$28$
vertices.

Let $G$ be a snark with oddness $4$ of minimum order, and
suppose that its order is at most~$26$. By
Lemmas~\ref{lemma:girth4} and~\ref{lemma:girth5} its girth is
at least $5$.\ka It is known \cite{brinkmann} that all
cyclically $4$-connected snarks with girth $5$ or more and
order not exceeding $36$ have oddness~$2$. Hence the cyclic
connectivity of $G$ equals either $2$ or $3$.

First, let us suppose that $G$ is cyclically $3$-connected.
Since $G$ is not cyclically $4$-connected, it contains a
cycle-separating $3$-edge-cut $S$. By
Lemma~\ref{lemma:3-edge-cut-uncolourable}, the edge-cut $S$
separates two uncolourable subgraphs $G_1$ and $G_2$. Let
$G_i'$ be the snark obtained from $G_i$ by joining a new vertex
$v_i$ to the three vertices of degree $2$ in $G_i'$. The snark
$G_i'$ is cyclically $3$-connected because $G$ was cyclically
$3$-connected. In other words, $G$ arises from two smaller
cyclically $3$-connected snarks $G_1'$ and $G_2'$ by the
reverse process, which splits off a vertex from both $G_1'$ and
$G_2'$, and joins the terminals from different snarks. With the
help of a computer we have checked that there is no snark $G$
with oddness $4$ on at most $26$ vertices that arises in this
way.

It follows that $G$ has a $2$-edge-cut. Let $S$ be a
$2$-edge-cut in $G$ such that one of the components separated
by $S$ is as small as possible. Again, we may assume that $G$
has girth at least $5$. Further, by
Lemma~\ref{lemma:2-edge-cut-uncolourable}, the cut $S$
separates two uncolourable subgraphs $G_1$ and~$G_2$, both with
even number of vertices. Both subgraphs must have at least ten
vertices, for otherwise we would obtain a snark of order at
most eight by connecting the two vertices of either $G_1$ or
$G_2$ with an edge, but there are no such snarks. Thus the
larger of the two subgraphs has at most $16$ vertices. Let
$G_i'$ be the snark obtained from $G_i$ by joining the vertices
of degree $2$ by an edge $e_i$. We may assume that $10 \le
|V(G_1')| \le |V(G_2')|\le 16$, hence $G_1'$ has $10$ or $12$
vertices.

If both $G_1'$ and $G_2'$ have a $2$-factor such that the added
edge $e_i$ belongs to an odd circuit of this $2$-factor, then
there $G$ has a $2$-factor $F$ containing an even circuit $C$
that passes through $S$. Let $F_i$ be the $2$-factor of $G_i'$
obtained by adding $e_i$ to $F \cap G_i$. The circuit $C$ has
length at least $6$ because $G$ has girth at least $5$. Since
$G$ contains at least four odd circuits in $F$, we have at
least $4\cdot 5 + 6 = 26$ vertices in~$G$. Thus $G$ has exactly
$26$ vertices, $C$ has length $6$, and $F$ contains four
circuits of length $5$ plus the circuit $C$. One of the odd
circuits of $F$ is in $F_1$ while three of them are in $F_2$.
Both $G_1'$ and $G_2'$ contain at least two vertices of $C$,
for otherwise $G$ would have a bridge. But then $G_2'$ contains
at least $17$ vertices (three odd $5$-circuits plus at least
two vertices of $C$), which is a contradiction.

We are thus left with the case where $G_1'$ or $G_2'$ has a
\textit{special} edge $e$, that is, an edge that belongs to no
odd circuit of any $2$-factor. By an exhaustive computer search
we have verified that no snark of order at most $16$ has a
special edge. (It is enough to look at snarks without parallel
edges. There is a unique snark on $18$ vertices with a special
edge; it is created by joining the terminals of two copies of
$P_3$ in such a way that the graph is bridgeless.) This
completes the proof.
\end{proof}

The computer search referred to in the proof of
Theorem~\ref{thm:3connected} can be extended to show that there
are exactly two snarks with oddness $4$ on $28$ vertices --
those displayed in Figure~\ref{fig:snark28}.

The following theorem provides an upper bound for the oddness
ratio of  a snark without restriction on connectivity.

\begin{theorem}\label{thm:7.5}
For every even integer $\omega\ge 2$ there exists a snark with
oddness $\omega$ having fewer than $7.5\omega$ vertices.
\end{theorem}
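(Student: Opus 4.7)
The plan is to recursively construct, for every integer $q\ge 1$, a snark $G_q$ of oddness $2q$ on at most $15q-2$ vertices; this yields $|V(G_q)|/\omega(G_q) \le (15q-2)/(2q) < 7.5$. I would take as base cases the Petersen graph $P$ for $q=1$ (with $10$ vertices and oddness $2$) and the $28$-vertex snark $H_1$ from Theorem~\ref{thm:3connected} for $q=2$ (with oddness $4$); both satisfy the required bound, and $H_1$ has cyclic connectivity $2$, which fits the setting of this section.

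For $q \ge 3$, I would obtain $G_q$ from $G_{q-1}$ by attaching a Petersen-based $2$-pole $N$ of order at most $15$ along a chosen $2$-edge-cut of $G_{q-1}$ (which exists by construction, since $G_{q-1}$ is built from $H_1$ by successive attachments). The network $N$ must be designed so that (i) it is uncolourable, so that Lemma~\ref{lemma:2-edge-cut-uncolourable} ensures that $G_q$ remains a snark, and (ii) $N$ is rigid enough that in every $2$-factor of $G_q$ its restriction to $N$ contains at least one odd cycle that is internal to $N$. A natural candidate for $N$ is a structure comprising a copy of the Petersen graph together with a small auxiliary component (for instance a $5$-cycle attached through $P_2$), shaped so that the two terminal edges of $N$ cannot be absorbed into a single long even cycle of $G_q$.

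The oddness claim $\omega(G_q)=2q$ splits into two parts. The upper bound $\omega(G_q) \le 2q$ is verified by exhibiting an explicit $2$-factor consisting of pairs of $5$-circuits, one pair per inserted block. The lower bound $\omega(G_q) \ge 2q$ proceeds by induction: assuming $\omega(G_{q-1})=2(q-1)$, I would analyse any $2$-factor $F$ of $G_q$ according to how it meets the two cut edges separating $N$ from the rest. If both cut edges are unused, then $F\cap N$ is itself a $2$-factor of $N$ and must contain an odd cycle by the uncolourability of $N$; if both cut edges are used, property (ii) still supplies an internal odd cycle inside $N$. Combined with the $2(q-1)$ odd cycles forced by $G_{q-1}$ (which corresponds via the cut to a genuine $2$-factor of $G_{q-1}$), the total count is at least $2q$.

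The main obstacle will be the precise design of $N$ and the verification that oddness really increases by $2$ per insertion. Naive candidates such as trivial superedges or standard dot-product gadgets only guarantee resistance growth (as in Proposition~\ref{prop:superresist}) and may leave oddness stuck at its previous value; hence $N$ must be chosen carefully so that short odd cycles inside $N$ cannot be merged with external cycles through the cut into a single longer even cycle. Once such a block is produced, the vertex count $|V(G_q)| \le 28 + 15(q-2) = 15q-2 < 7.5\,\omega(G_q)$ completes the proof.
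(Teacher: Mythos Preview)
Your proposal is not a proof but a plan with an explicitly acknowledged gap: you never construct the $2$-pole $N$, and you yourself flag ``the precise design of $N$ and the verification that oddness really increases by $2$ per insertion'' as the main obstacle. That \emph{is} the entire content of the theorem. A $2$-pole inserted across a $2$-edge-cut interacts with a $2$-factor in two modes (traversed or not), and in the traversed mode the path through $N$ may change the parity of the external circuit while contributing fewer internal odd circuits; getting a net gain of exactly $+2$ in every $2$-factor, on only $15$ internal vertices, is delicate. Vague candidates such as ``a Petersen graph together with a $5$-cycle attached through $P_2$'' do not obviously work, and nothing in Lemma~\ref{lemma:2-edge-cut-uncolourable} or Proposition~\ref{prop:superresist} controls oddness (only colourability or resistance). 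So as written the argument has no content beyond the arithmetic $15q-2<7.5\cdot 2q$.

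The paper avoids this difficulty by a different and cleaner gadget: given a snark $G$ and any vertex $v$, insert a copy of the $2$-pole $P_2$ into \emph{each of the three} edges at $v$. This adds $30$ vertices. For any $2$-factor $F$ of the new graph, exactly two of the three copies are traversed (the two edges of $F$ at $v$); a traversed copy of $P_2$ contributes one internal $5$-circuit and lengthens the external circuit by $10$ (preserving its parity), while the untouched copy contributes two $5$-circuits. Hence oddness goes up by exactly $4$ per step, with no case analysis about cut behaviour. Starting from the Petersen graph and from $H_1$ covers both residues of $\omega\pmod 4$ and gives $|V|\le 7.5\omega-2$. If you want to salvage your outline, the honest fix is to adopt this three-edges-at-a-vertex insertion (one ``step'' equals two of yours) rather than to search for a single $15$-vertex $2$-pole with the required rigidity.
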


\begin{proof}
Let $G$ be a snark and let $v$ be an arbitrary vertex of $G$.
Construct the graph $G^{(v)}$ by inserting a copy of $P_2$ into
each of the three edges incident with the vertex $v$ of $G$.
This construction increases the number of vertices of $G$ by
$30$ and the oddness by $4$. Indeed, let $F$ be a $2$-factor of
$G^{(v)}$; this $2$-factor has a corresponding $2$-factor $F'$
in $G$ obtained by discarding the circuits of $F$ contained in
the inserted copies of $P_2$ and replacing each path formed by
the traversal of $F$ through a copy of $P_2$ by the appropriate
edge of $G$. The vertex $v$ has degree $2$ in $F$, hence $F$
passes through two edges incident with $v$ in $G^{(v)}$. Each
of the copies of $P_2$ inserted into those edges contains
exactly one $5$-circuit of~$F$. Moreover, the remaining copy of
$P_2$ contains two $5$-circuits of~$F$. The circuit of $F$
passing through $v$ has length increased by $10$ compared to
the corresponding circuit of~$F'$. Put together, $F$ has four
more $5$-circuits than the corresponding $2$-factor $F'$ of
$G$.

Let $R_0$ be the Petersen graph and for each even $i\ge 1$ set
$R_{i+2}=R_i^{(v)}$, where $v$ is an arbitrary vertex of $R_i$.
As explained above, for each integer $t\ge 0$ the graph
$R_{2t}$ has order $30t+10$ and oddness $4t+2$. Thus
$|V(R_{2t})|= 7.5\cdot\omega(R_{2t})-5$.

Further, let $R_1$ be one of the two snarks of order $28$ with
oddness $4$ described in the proof of
Theorem~\ref{thm:3connected}, say $H_1$, and for each odd $i\ge
3$ set $R_{i+2}=R_i^{(v)}$, where $v$ is an arbitrary vertex of
$R_i$. It follows that for each integer $t\ge 0$ the graph
$R_{2t+1}$ has order $30t-2$ and oddness $4t$. Therefore
$|V(R_{2t+1})|=7.5\cdot\omega(R_{2t+1})-2$.

Summing up, for each even integer $\omega\ge 2$ we have
constructed a snark with oddness $\omega$ and order smaller
than $7.5\omega$.
\end{proof}

We can construct many different sequences $(G_i)$ of graphs
with oddness ratio converging to $7.5$ from below. However, no
construction has led to any constant better than $7.5$. This
fact and examples like the family $(R_n)_{n\ge 0}$ from the
previous theorem lead us to conjecture the following.

\begin{conjecture}\label{conj1}
Every snark $G$ with oddness $\omega$ has at least $7.5\omega-5$
vertices.
\end{conjecture}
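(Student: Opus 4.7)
The plan is to prove Conjecture~\ref{conj1} by induction on $\omega$, with the extremal construction of Theorem~\ref{thm:7.5} as the guiding picture: every tight example is obtained from the Petersen graph by iterated insertion of copies of $P_2$ (each costing exactly $30$ vertices for $4$ units of oddness). The natural base cases are $\omega=2$ (only the Petersen graph attains equality, giving $10=7.5\cdot 2-5$) and $\omega=4$ (handled by Theorem~\ref{thm:3connected}, since $28>7.5\cdot 4-5=25$). As a first step I would apply Lemmas~\ref{lemma:girth4}, \ref{lemma:girth5}, \ref{lemma:2-edge-cut-uncolourable}, and \ref{lemma:3-edge-cut-uncolourable} to reduce to a snark $G$ of girth at least $5$ in which every $2$- or $3$-edge-cut separates two uncolourable subgraphs.

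For the inductive step I would distinguish cases by cyclic connectivity. If $G$ has a cycle-separating $k$-edge-cut with $k\in\{2,3\}$, completing each side to a snark $G_i'$ yields a decomposition with $\omega(G)\le\omega(G_1')+\omega(G_2')$ (this inequality is the content of Lemmas~\ref{lemma:2-edge-cut-uncolourable} and \ref{lemma:3-edge-cut-uncolourable} together with the Parity Lemma, when one tracks circuits that cross the cut). Applying the induction hypothesis to each piece and adding the at most $2$ vertices lost/gained by the completion should yield $|V(G)|\ge 7.5\omega(G)-5$, provided the $-5$ slack is carefully allocated to exactly one of the two pieces. In the cyclically $4$-connected case (where Theorem~\ref{thm:lowerbd} already gives $5.41\omega$), I would push the polyhedral argument of Proposition~\ref{prop:basic} much further: instead of minimising a single linear functional counting $\delta$-incidences with $5$-circuits, I would minimise a weighted functional that charges extra for each vertex of the chosen perfect matching lying close to multiple odd circuits of the complementary $2$-factor. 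Combined with the refined bookkeeping of Lemma~\ref{lemma:bound}, this should, at least in principle, close the gap between the coefficient $7$ obtained when $q=0$ and the target $7.5$ by forcing long odd circuits ($\ge 7$) to come with additional even-circuit or matching structure that cannot be amortised away.

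The main obstacle is precisely this last point: Lemma~\ref{lemma:bound} gives $n\ge 7\omega-q/3$, which is simply too weak when the $2$-factor contains many $7$-circuits with few $5$-circuits. The polyhedral method of Proposition~\ref{prop:basic} cannot be sharpened in isolation because its constant $1/6$ is already tight on the Petersen graph. To break through $7$ one needs a structural dichotomy of the following flavour: either $G$ contains a vertex-disjoint ``Petersen chunk'' (a copy of $P_2$, $P_3$, or $P_4^v$) that can be removed to give a smaller snark whose oddness drops by a predictable amount, or else every minimum $2$-factor of $G$ contains odd circuits of length $\ge 9$ in sufficient abundance to push the ratio above $7.5$ on its own. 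Establishing such a dichotomy, and in particular ruling out exotic configurations where $5$- and $7$-circuits coexist in a way that mimics the $P_2$-insertion pattern without actually containing one, is where I expect almost all of the work to lie.

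Finally, the equality case should be analysed to confirm that only the family $(R_{2t})_{t\ge 0}$ of Theorem~\ref{thm:7.5} attains $|V(G)|=7.5\omega(G)-5$; this would follow from tracing the induction and observing that equality in each step forces both sides of every edge-cut decomposition to themselves be extremal, terminating at a Petersen graph at the root of the decomposition tree.
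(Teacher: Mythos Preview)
The statement you are trying to prove is labelled \emph{Conjecture} in the paper, and the paper does not prove it; it only offers the construction of Theorem~\ref{thm:7.5} as evidence for sharpness and the paragraph following Lemma~\ref{lemma:2/3} as heuristic support. There is therefore no ``paper's own proof'' to compare against, and your proposal should be read as an attack on an open problem rather than as a reconstruction.

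As an attack, the proposal has two concrete gaps that you yourself partly flag but underestimate. First, the inductive step across small edge-cuts does not close arithmetically. If a $3$-edge-cut splits $G$ into $G_1$ and $G_2$ with completions $G_1'$, $G_2'$, then $|V(G)|=|V(G_1')|+|V(G_2')|-2$, and even granting the (not obvious) inequality $\omega(G)\le\omega(G_1')+\omega(G_2')$, induction gives only $|V(G)|\ge 7.5\,\omega(G)-12$. Your remark that the $-5$ slack should be ``allocated to exactly one of the two pieces'' is not something you can choose: both $G_i'$ are genuine snarks and both may be extremal. The graph $H_2$ of Theorem~\ref{thm:3connected} already shows the tension---three copies of $P_3$ glued along $3$-cuts give $28$ vertices and oddness $4$, beating $7.5\cdot 4-5=25$ only by $3$, so the slack really does accumulate across cuts and cannot be wished onto one side. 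Moreover, Lemmas~\ref{lemma:2-edge-cut-uncolourable} and \ref{lemma:3-edge-cut-uncolourable} prove equality $\omega(G')=\omega(G)$ only when one side of the cut is \emph{colourable}; they say nothing about the additivity of oddness when both sides are uncolourable, so the inequality you invoke needs an independent argument.

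Second, for the cyclically $4$-connected case you correctly observe that Lemma~\ref{lemma:bound} tops out at the coefficient $7$ and that the constant $1/6$ in Proposition~\ref{prop:basic} is tight. Your proposed remedy---a weighted polyhedral functional plus a structural dichotomy isolating ``Petersen chunks''---is not a proof sketch but a wish list: you give no candidate functional, no mechanism by which removing a copy of $P_2$, $P_3$, or $P_4^v$ would lower $\omega$ by a controlled amount (removing $P_2$ from an edge need not drop oddness by exactly $4$ unless one already knows the structure of an optimal $2$-factor), and no argument ruling out the ``exotic configurations'' you name. Since this case is where the entire difficulty of the conjecture resides, the proposal as written does not advance beyond what the paper already records as open.
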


The smallest snark with oddness $6$ we are aware of is the
graph $R_2$ of order $40$ from the proof of
Theorem~\ref{thm:7.5}; it is determined uniquely up to
isomorphism. Somewhat surprisingly, the oddness ratio of $R_2$
is $20/3$, which is better than the ratio $7$ reached by $R_1$,
one of two smallest graphs with oddness $4$.

A bit more support for Conjecture \ref{conj1} is given by the
following argument. To generalise the construction given in
Theorem~\ref{thm:7.5} we attempt to insert a copy of $P_2$ into
several edges of an appropriate cubic graph $H$. Let $S$ be the
set of edges of $H$ which are replaced by copies of $P_2$. We
have added $10|S|$ vertices and have guaranteed one or two odd
circuits in each copy of $P_2$, depending on whether a
$2$-factor of the resulting graph $H'$ passes through the copy
or not. We show that there is a $2$-factor $F$ passing through
at least $2/3$ of the copies of $P_2$ we have inserted into
$H$. This result is, in fact, due to Fan \cite[Equation~2]{fan}
where it was proved by the technique of light $1$-factors. Our
proof uses the matching polytope instead.

\begin{lemma}[Fan \cite{fan}]\label{lemma:2/3}
For any set $S$ of edges of a bridgeless cubic graph $G$ there
exists a $2$-factor in $G$ which contains at least $2/3\cdot
|S|$ edges from $S$.
\end{lemma}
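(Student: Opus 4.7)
The plan is to mirror the linear-programming argument already used in the proof of Proposition~\ref{prop:basic}, but applied to a different objective function. Since a $2$-factor of a cubic graph is precisely the complement of a perfect matching, finding a $2$-factor that contains many edges of $S$ is equivalent to finding a perfect matching that contains few edges of $S$.

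First I would set up the perfect matching polytope $\mathcal{P}=\mathcal{P}(G)$, which is nonempty because $G$ is bridgeless and cubic, and recall that the uniform vector $(1/3,1/3,\dots,1/3)$ lies in $\mathcal{P}$. Then I would introduce the linear functional
$$
g(\mathbf{x})=\sum_{e\in S}\mathbf{x}(e),
$$
defined on $\mathcal{P}$. Evaluating at the uniform vector gives $g((1/3,\dots,1/3))=|S|/3$.

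Next, since $g$ is linear it attains its minimum over the polytope $\mathcal{P}$ at some vertex $\mathbf{x}_0$, and vertices of $\mathcal{P}$ are characteristic vectors of perfect matchings. Thus there is a perfect matching $M$ with
$$
|M\cap S|=g(\mathbf{x}_0)\le |S|/3.
$$
Letting $F=E(G)\setminus M$ be the complementary $2$-factor, we obtain
$$
|F\cap S|=|S|-|M\cap S|\ge |S|-|S|/3=\tfrac{2}{3}|S|,
$$
which is the claim.

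There is no real obstacle: the heart of the argument is simply the observation that the uniform vector $(1/3,\dots,1/3)$ gives an explicit fractional point at which the sum $\sum_{e\in S}\mathbf{x}(e)$ equals $|S|/3$, together with the standard fact that linear functions on a polytope attain their extrema at vertices. The only step worth stressing in the writeup is the translation between perfect matchings and $2$-factors via complementation, so that the bound $|M\cap S|\le |S|/3$ converts cleanly into the desired bound $|F\cap S|\ge 2|S|/3$.
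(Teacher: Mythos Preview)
Your proposal is correct and follows essentially the same argument as the paper: minimise the linear functional $\sum_{e\in S}\mathbf{x}(e)$ over the perfect matching polytope, use the uniform point $(1/3,\dots,1/3)$ to bound the minimum by $|S|/3$, take a vertex minimiser corresponding to a perfect matching $M$ with $|M\cap S|\le |S|/3$, and pass to the complementary $2$-factor. The only difference is cosmetic (the paper even has a typo, summing over $E(G)$ where $S$ is meant), so there is nothing to add.
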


\begin{proof}
Let $\mathbf{x}$ be a point of the perfect matching polytope
$\mathcal{P}(G)$ such that the sum $$s(\mathbf{x})=\sum_{e\in
E(G)} \mathbf{x}(e)$$ is minimal. The minimum of
$s(\mathbf{x})$ is attained on at least one vertex of
$\mathcal{P}(G)$, hence we may assume that $\mathbf{x}$ is a
vertex. Since $(1/3, 1/3, \dots , 1/3) \in \mathcal{P}(G)$, the
perfect matching corresponding to $\mathbf{x}$ has at most
$1/3\cdot |S|$ edges in $S$. The complementary $2$-factor
therefore contains at least $2/3\cdot |S|$ edges from $S$.
\end{proof}

From Lemma~\ref{lemma:2/3} it follows that the number of added
odd circuits is at most $4/3\cdot |S|$. Since we have added
$10\cdot |S|$ vertices, the oddness ratio of $H$ cannot be
improved to be less than $7.5$, unless we have created many odd
circuits by inserting copies of $P_2$ into even circuits of the
$2$-factor of $H$ corresponding to $F$ -- and many even
circuits in a $2$-factor of $H$ mean that $H$ has a large
oddness ratio.

We do not have any better bounds on resistance than those
established by Steffen \cite{steffen2}: $8 \le A_\rho^2 \le 9$.
Our construction proves that $A_\omega^2\le 7.5$. This shows
that $A_\omega^2$ is strictly smaller than $A_\rho^2$.

\begin{figure}[h]
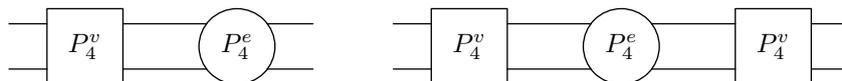

\center
\includegraphics{oddness4-1}\hskip 1cm \includegraphics{oddness4-2}
\caption{Networks $N_1$ (left) and $N_2$ (right)}
\label{fig:4-poles}
\end{figure}

\section{Connectivity 4}

The case of cyclic connectivity $4$ is perhaps most interesting
of all, because snarks with cyclic connectivity smaller than
$4$, or girth smaller than $5$, are usually considered to be
trivial. In a recent paper \cite{hagglund}, Hägglund
constructed an infinite family of cyclically $4$-connected
snarks whose oddness ratio is $15$. The aim of this section is
to present a family of snarks that improves the ratio to~$13$.
We start with the $4$-poles $P_4^v$ and $P_4^e$ defined in
Section~\ref{sec:blocks} and construct two specific
uncolourable $4$-poles $N_1$ and $N_2$. Recall that the
terminals of both $P_4^v$ and $P_4^e$ are partitioned into two
pairs. The colouring properties of $P_4^v$ and $P_4^e$
described in Section~\ref{sec:blocks} imply that if we join one
pair of terminal edges from $P_4^e$ to a pair of terminal edges
from $P_4^v$ we get an uncolourable $4$-pole. This $4$-pole has
18 nonterminal vertices and will be denoted by $N_1$ (see
Figure~\ref{fig:4-poles} left). The $4$-pole $N_2$ arises from
$P_4^e$ and two distinct copies of $P_4^v$ by joining each pair
of terminal edges of $P_4^e$ to a pair of terminal edges in a
different copy of $P_4^v$ (see Figure~\ref{fig:4-poles} right).
Thus $N_2$ has 26 nonterminal vertices. The important property
of $N_2$ is that it is not only uncolourable, but it remains so
even after removing an arbitrary nonterminal vertex. This is
obviously true if the removed vertex $w$ lies in a copy of
$P_4^v$ in $N_2$, because the removal of such a vertex leaves a
copy of $N_1$ in $N_2-w$ intact, and $N_1$ is uncolourable.
Assume that we have removed a vertex $w$ from a copy of $P_4^e$
in $N_2$. Then in every $3$-edge-colouring of $P_4^e-w$ at
least one of the pairs of terminal edges of $P_4^e$ will
receive distinct colours, for otherwise a $3$-edge-colouring of
$P_4^e-w$ would induce a $3$-edge-colouring of $P-w$, which is,
however, uncolourable by the Parity Lemma. On the other hand,
the same pair of edges is forced to have identical colours from
the adjacent copy of $P_4^v$. This again implies that $N_2-w$
is uncolourable.

To construct a cyclically $4$-connected snark with arbitrarily
high oddness we take a number of copies of $N_1$ and a number
of $N_2$, arrange them into a circuit, and join one pair of
terminal edges from each copy to a pair of terminal edges of
its predecessor and another pair of terminal edges to a pair of
terminal edges of its successor. The way in which copies of
$N_1$ and $N_2$ are arranged is not unique, therefore we may
get several non-isomorphic graphs even if we only use copies of
one of $N_1$ and $N_2$.

In this construction, each copy of $N_1$ adds $1$ and each copy
of $N_2$ adds $2$ to the resistance of the resulting graph.
Thus if we take $r$ copies of $N_2$, we get a cyclically
$4$-connected snark of order $26r$ with resistance $2r$ and
oddness at least $2r$. If we take $r$ copies of $N_2$ and one
copy of $N_1$ we get a cyclically $4$-connected snark of order
$26r+18$ with resistance $2r+1$ and oddness at least $2r+2$. In
particular, this shows that $A_\omega^4\le A_\rho^4\le 13$.

For $r=1$ our construction produces a cyclically $4$-connected
snark of order $44$ with resistance $3$ and oddness $4$ (see
Figure~\ref{snark44}). This is currently the smallest known
non-trivial snark of oddness at least $4$. It improves the best
previously known values of $50$ and $46$ (see
\cite{brinkmann,hagglund}).

We have tried various other approaches to construct a
cyclically $4$-connected snark with oddness $4$ on fewer
vertices (mostly computer-assisted), but all of them eventually
led to snarks of order $44$, not necessarily isomorphic to the
one from Figure~\ref{snark44}. We therefore believe that the
following is true.

\begin{conjecture}
The smallest cyclically $4$-connected snark with oddness $4$
has $44$ vertices.
\end{conjecture}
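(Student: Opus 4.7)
The plan is to reduce the conjecture to a finite computer search that extends the existing catalogue of small cyclically $4$-connected snarks. Let $G$ be a cyclically $4$-connected snark with $\omega(G)=4$, and assume for contradiction that $|V(G)|\le 42$. Applying Lemma~\ref{lemma:girth4} (whose reduction preserves cyclic connectivity) I would first pass to a snark of no larger order, with the same oddness, with cyclic connectivity still at least $4$, and with girth at least~$4$. It then suffices to rule out cyclically $4$-connected snarks of girth at least~$4$ and order at most~$42$ with oddness~$4$, which I would split into the two subcases girth $=4$ and girth $\ge 5$.

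For the subcase of girth at least~$5$ I would rely on the catalogue of Brinkmann et al.~\cite{brinkmann}: every cyclically $4$-connected snark of girth at least $5$ on at most $36$ vertices has oddness exactly~$2$. Only the three orders $38$, $40$, $42$ remain, and for these I would extend Brinkmann's snark generator and, for each graph produced, compute the oddness -- either by enumerating the $2$-factors (equivalently, the perfect matchings) or by solving a small integer-programming or SAT instance. The goal is to confirm that every such snark again has oddness~$2$.

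The subcase of girth exactly~$4$ is more delicate, because the standard $4$-cycle reduction underlying Lemma~\ref{lemma:girth5} can destroy cyclic $4$-connectivity. Here I would exploit the fact that in a cyclically $4$-connected cubic graph every $4$-cycle already determines a cycle-separating $4$-edge-cut whose colour classes are tightly constrained by the Parity Lemma; this should reduce the possible ``environments'' of a $4$-cycle to finitely many types, each of which can either be argued away structurally or be absorbed into the same computer enumeration. Once both subcases are handled, the matching upper bound follows from checking that the $44$-vertex graph in Figure~\ref{snark44} really has oddness~$4$ (rather than $2$), which is direct from the properties of $N_1$ and $N_2$ already established in the text.

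The main obstacle is clearly computational rather than conceptual: there is no visible structural reason that would immediately forbid a cyclically $4$-connected snark with oddness~$4$ on, say, $40$ vertices, and the conjecture is motivated entirely by the observation that every known construction lands exactly at~$44$. Any proof will therefore hinge on the feasibility of a complete enumeration of cyclically $4$-connected cubic graphs of order up to~$42$, combined with a reliable oddness computation for each candidate, which is near the boundary of what current generators and $2$-factor enumerators can handle. A softer intermediate result worth attempting first would be to upgrade Theorem~\ref{thm:lowerbd} with a dedicated lower bound for cyclic connectivity~$4$, which might raise $n/\omega(G)$ from the generic $525/97$ closer to~$11$ and thereby shrink the computational window considerably.
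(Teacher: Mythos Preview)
The statement you are addressing is a \emph{conjecture} in the paper, not a theorem; the authors explicitly say they ``believe'' it after reporting that various computer-assisted attempts all landed at order~$44$, and they offer no proof. Hence there is no argument in the paper to compare your proposal against.

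As for the proposal itself, it is an honest outline of a computational attack rather than a proof. Two points are worth flagging. First, your girth-$4$ subcase is only sketched: the assertion that the environments of a $4$-cycle ``should reduce'' to finitely many types that can be ``argued away structurally or absorbed into the same computer enumeration'' is exactly the part that would need real work, and nothing in the paper supplies it; Lemma~\ref{lemma:girth5} is the natural tool, but, as you note, it may drop cyclic connectivity below~$4$, so you cannot simply feed its output back into the girth-$5$ catalogue. Second, the girth~$\ge 5$ subcase is not a gap in logic but a gap in computation: the catalogue of \cite{brinkmann} stops at order~$36$, and you are proposing to extend it through~$42$ and compute oddness for every graph produced. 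That is a substantial project in its own right, and until it is carried out your argument remains conditional. Your closing remark about sharpening Theorem~\ref{thm:lowerbd} for cyclic connectivity~$4$ is sensible, but note that even pushing $n/\omega(G)$ up to~$11$ would still leave everything from order~$44$ downwards to be checked, so it would not by itself close the gap.
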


\begin{figure}[h!]
\center
\includegraphics[scale=.8]{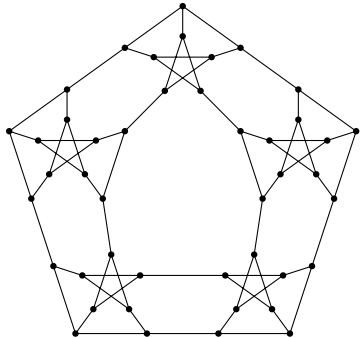}
\caption{A cyclically $4$-connected snark of order $44$ with oddness $4$.}
\label{snark44}
\end{figure}

\section{Connectivity 5}

Steffen in \cite[Theorem 2.3]{steffen2} constructed a
cyclically $5$-connected snark of order $608r$ with oddness at
least $8r$ and resistance $6r$ for every positive integer $r$.
His construction shows that $A_{\rho}^5 \le 101+1/3$ and
$A_{\omega}^5 \le 76$. Here we construct, for each integer $r
\ge 2$, a cyclically $5$-connected snark of order $25r$ with
resistance $r$. Our construction thus yields $A_{\omega}^5 \le
A_{\rho}^5 \le 25$.

We take two disjoint copies of $P_5^{ev}$ and one copy of
$P_5^{vvv}$. Recall that $P_5^{vvv}$ has two pairs of terminal
edges and one single terminal edge. We join one pair of
terminals to the pair of terminals in the first copy of
$P_5^{ev}$ and the other pair to the pair in the second copy of
$P_5^{ev}$ (see Figure~\ref{fig:7pole}). The resulting $7$-pole
$Z$ has $25$ vertices; it coincides with the one constructed by
Steffen \cite[Figure 1]{steffen2}.

In every $3$-edge-colouring of $P_5^{vvv}$ the terminal edges
of one of the two pairs of terminals must have the same colour.
On the other hand, in every $3$-edge-colouring of $P_5^{ev}$
the edges in the pair of terminals have different colours.
Therefore the $7$-pole $Z$ is uncolourable. The required graphs
can be easily constructed by joining terminals from $r$
disjoint copies of $Z$ in such a way that the graph becomes
cyclically $5$-connected. Since $Z$ is uncolourable, we need to
remove a vertex from each copy of $Z$ in order to get a
$3$-edge-colourable subgraph. The resistance of the resulting
graph is therefore at least $r$.

\begin{figure}
\center
\includegraphics{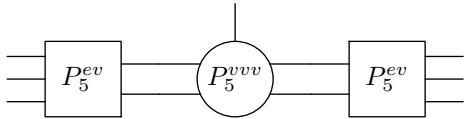}
\caption{A non-colourable cyclically $5$-connected $7$-pole.}
\label{fig:7pole}
\end{figure}

\section{Connectivity 6}

The first infinite class of cyclically $6$-connected snarks
with arbitrarily large oddness was described by Kochol
\cite{kocholOddness}. For each positive integer $n$ he
constructed a cyclically $6$-connected snark of order $118r$
with resistance at least $r$, thereby establishing the bound
$A^6_\omega\le A^6_\rho\le 118$. We improve this upper bound to
$99$ by constructing, for each integer $r\ge 2$, a snark with
resistance at least $r$ and with order $99r$ or $99r+1$,
depending on whether $r$ is even or odd, respectively.

\begin{figure}
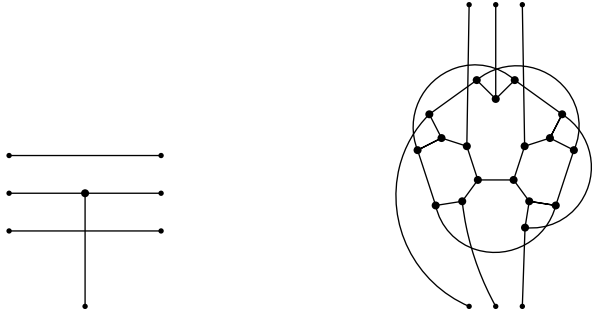

\center
\includegraphics{oddness6-4}\hskip 3 cm\includegraphics{oddness6-3}
\caption{The supervertex $X$ and the superedge $Y$.}
\label{fig:superelements}
\end{figure}

Take $r$ copies of $P_3$, say $Q_1,Q_2,\ldots,Q_r$, where $r\ge
2$. Arrange them into a circuit and, for each
$i\in\{1,2,\ldots, r\}$, join one of the terminals of $Q_i$ to
a terminal of $Q_{i-1}$ and another terminal to a terminal of
$Q_{i+1}$, with indices reduced modulo $r$. This leaves one
terminal of each $Q_i$ unmatched and produces an $r$-pole.
Partition the terminals of this $r$-pole into pairs and, if $r$
is odd, one triple. Identify the terminals of each partition
set, and suppress the resulting $2$-valent vertices to obtain a
cubic graph~$L_r$. Its order is $9r$ or $9r+1$, depending on
whether $r$ is even or odd, respectively. The identification
process can clearly be performed in such a way that $L_r$ is
$3$-connected.

To obtain a snark with cyclic connectivity $6$ we apply
superposition to $L_r$. Nontrivial supervertices will be copies
of the $7$-pole $X$ with a single nonterminal vertex shown in
Figure~\ref{fig:superelements} (left), while nontrivial
superedges will be copies of the $6$-pole $Y$ with $18$
nonterminal vertices created from the flower snark $J_5$ by
splitting off two nonadjacent vertices, one from the single
$5$-circuit of $J_5$; see Figure~\ref{fig:superelements}
(right). We choose a circuit $C$ in $L_r$ which intersects each
copy of $P_3$ in the edges displayed bold in
Figure~\ref{fig:super6} (left) and finish the superposition by
replacing each vertex on $C$ with a copy of $X$ and each edge
on $C$ with a copy of $Y$; we use trivial supervertices and
superedges everywhere outside $C$. The superposition is
indicated in Figure~\ref{fig:super6}.

\begin{figure}
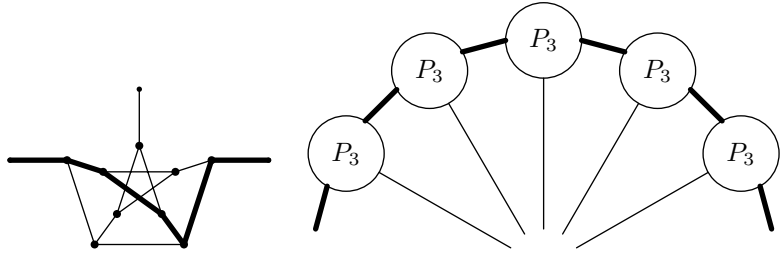

\center
\begin{tabular}{ccc}
\includegraphics{oddness6-1} & \includegraphics{oddness6-2}\\
\end{tabular}
\caption{The superposed circuit passing through a copy of $P_3$
and the arrangement of the copies of $P_3$ in $M_r$.}
\label{fig:super6}
\end{figure}

Let $M_r$ be the resulting graph. The construction replaces the
$5r$ vertices of $C$ with the same number of new vertices, one
for each nontrivial supervertex, and adds $18\cdot 5r=90r$
vertices, $18$ new vertices for each nontrivial superedge. It
follows that $M_r$ has $99r$ or $99r+1$ vertices depending on
whether $n$ is even or odd, respectively. Due to the choice of
supervertices and superedges and the fact that the removal of
the edges of $C$ from each copy of $P_3$ in $L_r$ leaves an
acyclic subgraph, the graph $M_r$ is indeed cyclically
$6$-connected.

We prove that $\rho(M_r)\ge r$. Observe that in order to get a
colourable graph from $L_r$, one has to delete at least one
vertex from each copy of $P_3$, that is, at least $r$ vertices
in total. By Proposition~\ref{prop:superresist},
$\rho(M_r)\ge\rho(L_r)\ge n$, and therefore $\omega(M_r)\ge r$.
This yields the bounds $A_\omega^6 \le A_\rho^6\le 99$.

\bigskip

\noindent\textbf{Acknowledgements.}  The authors acknowledge
partial support from the research grants APVV-0223-10, VEGA
1/0634/09, and from the APVV grant ESF-EC-0009-10 within the
EUROCORES Programme EUROGIGA (project GReGAS) of the European
Science Foundation. The second author acknowledges partial
support also from VEGA 1/1005/12.

\end{document}